\documentclass[pra,twocolumn,superscriptaddress,
showpacs,preprintnumbers,amsmath,amssymb]{revtex4}

\usepackage{graphicx}
\usepackage{subfigure}
\usepackage{amsthm}
\usepackage{tensor}
\usepackage{color}
\usepackage[all]{xy}
\usepackage{tikz}
\usepackage{dsfont}
\usepackage{times,txfonts}
\usetikzlibrary{positioning}
\usepackage{braket}

\newtheorem{Thm}{Theorem}
\newtheorem{Lem}[Thm]{Lemma}
\newtheorem{Prop}[Thm]{Proposition}
\newtheorem{Cor}[Thm]{Corollary}
\theoremstyle{definition}

\begin{document}

\title{One-Shot Quantum State Exchange}

\author{Yonghae Lee} \email{yonghaelee@khu.ac.kr}
\affiliation{
Department of Mathematics and Research Institute for Basic Sciences,
Kyung Hee University, Seoul 02447, Korea}

\author{Hayata Yamasaki} \email{yamasaki@qi.t.u-tokyo.ac.jp}
\affiliation{
Photon Science Center, Graduate School of Engineering, The University of Tokyo,
Bunkyo-ku, Tokyo 113-8656, Japan}

\author{Gerardo Adesso} \email{gerardo.adesso@nottingham.ac.uk}
\affiliation{
School of Mathematical Sciences
and Centre for the Mathematics and Theoretical Physics of Quantum Non-Equilibrium Systems,
University of Nottingham, University Park, Nottingham NG7 2RD, United Kingdom}

\author{Soojoon Lee}\email{level@khu.ac.kr}
\affiliation{
Department of Mathematics and Research Institute for Basic Sciences,
Kyung Hee University, Seoul 02447, Korea}
\affiliation{
School of Mathematical Sciences
and Centre for the Mathematics and Theoretical Physics of Quantum Non-Equilibrium Systems,
University of Nottingham, University Park, Nottingham NG7 2RD, United Kingdom}

\pacs{
03.67.Hk, 
89.70.Cf, 
03.67.Mn  
}
\date{\today}

\begin{abstract}
The quantum state exchange is a quantum communication task
in which two users exchange their respective quantum information in the asymptotic setting.
In this work,
we consider a one-shot version of the quantum state exchange task,
in which the users hold a single copy of the initial state,
and they exchange their parts of the initial state
by means of entanglement-assisted local operations and classical communication.
We first derive lower bounds on the least amount of entanglement required
for carrying out this task,
and provide conditions on the initial state such that the protocol succeeds with zero entanglement cost.
Based on these results,
we reveal two counter-intuitive phenomena in this task, which make it different from a conventional SWAP operation.
One tells how the users deal with their symmetric information
in order to reduce the entanglement cost.
The other shows that it is possible for the users to gain
extra shared entanglement after this task.
\end{abstract}

\maketitle

{\it Introduction.}---
In quantum information theory,
the quantum state exchange~\cite{OW08,LTYAL19} is a quantum communication task,
in which two users, Alice and Bob, exchange their quantum information
by means of local operations and classical communication (LOCC) assisted by shared entanglement.
A main research aim in the study of the quantum state exchange is to evaluate
the least amount of entanglement needed for the task,
as in other quantum communication tasks,
such as quantum state merging~\cite{HOW05,HOW06} and quantum state redistribution~\cite{DY08,YD09}.

Most quantum communication tasks~\cite{S95,HOW05,HOW06,DY08,YD09,ADHW09}
including the quantum state exchange usually assume the {\em asymptotic scenario},
in which users can have an unbounded number of
independent and identically distributed copies of an initial state,
and they carry out their task with the copies.
On the other hand,
it is not easy in a realistic situation
to prepare a sufficiently large number of state copies,
and the amount of non-local resources available for the users is limited.
To reflect these practical difficulties, quantum information research has focused more recently on
the {\em one-shot scenario}~\cite{BCR11,RR12,WR12,DH13,BCT16,ZLYCM18,AJW18,RFWA18,YM19}.

In this work,
we introduce and study the {\em one-shot quantum state exchange} (OSQSE) task. This is not only a useful quantum communication task, but can also
have a potential application in quantum computation.
Let us consider a specific situation as follows.
Alice and Bob want to carry out the SWAP gate~\cite{NC10},
which plays an important role in universal quantum computation~\cite{JM08}.
The problem is that they cannot directly apply the SWAP gate,
because they are far apart.
If Alice and Bob are sharing prior entanglement,
then the OSQSE can be a method to non-locally perform the SWAP gate, as both
operationally provide the same result. However, the OSQSE has unique properties which we reveal in this work.

We formally define the OSQSE and its optimal entanglement cost,
and derive computable lower bounds on the latter,
which in turn yield bounds for the asymptotic quantum state exchange~\cite{OW08,LTYAL19}.
In addition,
we provide two useful conditions to decide whether a given initial state
enables OSQSE with zero entanglement cost.
We then show that there exist counter-intuitive phenomena in the OSQSE task that set it apart from the conventional SWAP operation.

{\it One-shot quantum state exchange.}---
Consider two users, Alice and Bob, holding parts $A$ and $B$
of the initial state $\ket{\psi}\equiv\ket{\psi}_{A_1B_1A_2B_2R}$ with systems $A=A_1A_2$ and $B=B_1B_2$,
respectively.
Alice's and Bob's goal is either to exchange their parts $A_1$ and $B_1$
or to exchange their whole parts $A$ and $B$.

Specifically,
let $\psi_{f_1}$ and $\psi_{f_{12}}$ be the final states of the task, 
\begin{eqnarray}
\psi_{f_1}&=&
\big(
\mathds{1}_{A_1\to A'_1}
\otimes
\mathds{1}_{B_1\to B'_1}
\otimes
\mathds{1}_{A_2B_2R}
\big)(\psi), \nonumber \\
\psi_{f_{12}}&=&
\big(
\mathds{1}_{A\to A'}
\otimes
\mathds{1}_{B\to B'}
\otimes
\mathds{1}_{R}
\big)(\psi), \nonumber
\end{eqnarray}
where $\psi=\ket{\psi}\bra{\psi}$,
and the dimension of system $X'$ is identical to that of system $X$.
Note that $B'_1$, $B'$ and $A'_1$, $A'$ are Alice's and Bob's systems,
respectively.
Then three joint operations
\begin{eqnarray}\label{eq:OSQSE}
&&\mathcal{E}_{\psi,K,L}^1:
A_1E_{\mathrm{A}}^{\mathrm{in}}\otimes
B_1E_{\mathrm{B}}^{\mathrm{in}}
\longrightarrow B'_1E_{\mathrm{A}}^{\mathrm{out}}\otimes
A'_1E_{\mathrm{B}}^{\mathrm{out}}, \nonumber \\
&&\mathcal{E}_{\psi,K,L}^{1|2}:
AE_{\mathrm{A}}^{\mathrm{in}}\otimes
BE_{\mathrm{B}}^{\mathrm{in}}
\longrightarrow B'_1A_2E_{\mathrm{A}}^{\mathrm{out}}\otimes
A'_1B_2E_{\mathrm{B}}^{\mathrm{out}},  \\
&&\mathcal{E}_{\psi,K,L}^{12}:
AE_{\mathrm{A}}^{\mathrm{in}}\otimes
BE_{\mathrm{B}}^{\mathrm{in}}
\longrightarrow B'E_{\mathrm{A}}^{\mathrm{out}}\otimes
A'E_{\mathrm{B}}^{\mathrm{out}}, \nonumber
\end{eqnarray}
are called the OSQSE protocols
of $\ket{\psi}$,
if they are performed by LOCC between Alice and Bob,
and satisfy
\begin{eqnarray} \label{eq:PDs} 
\psi_{f_1} \otimes {\Phi}
&=&
\left(\mathcal{E}_{\psi,K,L}^1\otimes\mathds{1}_{A_2B_2R}\right)
\left( \psi \otimes {\Psi} \right)
\nonumber \\
&=&
\left(\mathcal{E}_{\psi,K,L}^{1|2}\otimes\mathds{1}_{R}\right)
\left( \psi \otimes {\Psi} \right),
 \\
\psi_{f_{12}} \otimes {\Phi}
&=&
\left(\mathcal{E}_{\psi,K,L}^{12}\otimes\mathds{1}_{R}\right)
\left( \psi \otimes {\Psi} \right), \nonumber
\end{eqnarray}
where
${\Psi}$ and ${\Phi}$ are pure maximally entangled states
with Schmidt rank $K$ and $L$
on systems $E_{\mathrm{A}}^{\mathrm{in}}E_{\mathrm{B}}^{\mathrm{in}}$
and $E_{\mathrm{A}}^{\mathrm{out}}E_{\mathrm{B}}^{\mathrm{out}}$,
respectively.
It is possible to generalize the above definitions by adding errors
for approximation to Eq.~(\ref{eq:PDs}),
but it suffices to only consider error-free protocols
to obtain our main results.

At this point,
it is instructive to inform differences among the three protocols in Eq.~(\ref{eq:OSQSE}) as follows:
The first two protocols $\mathcal{E}_{\psi,K,L}^1$ and $\mathcal{E}_{\psi,K,L}^{1|2}$ indicate
that only the parts $A_1$ and $B_1$ are exchanged,
while the whole parts $A_1A_2$ and $B_1B_2$ are exchanged
in the third protocol $\mathcal{E}_{\psi,K,L}^{12}$.
In addition,
the parts $A_2$ and $B_2$ can be used for exchanging $A_1$ and $B_1$ in the protocol $\mathcal{E}_{\psi,K,L}^{1|2}$,
while $A_2$ and $B_2$ are untouched in the protocol $\mathcal{E}_{\psi,K,L}^1$.

Depending on the types of OSQSE protocols,
we define three \emph{optimal entanglement costs} 
\begin{eqnarray}\label{eq:costs}
\mathbf{e}_{A_1\leftrightarrow B_1}\left(\psi\right)
&=& \inf_{\mathcal{E}_{\psi,K,L}^1} \left( \log K-\log L \right), \nonumber \\
\mathbf{e}_{A_1\leftrightarrow B_1}^{A_2B_2}\left(\psi\right)
&=& \inf_{\mathcal{E}_{\psi,K,L}^{1|2}} \left( \log K-\log L \right),  \\
\mathbf{e}_{A\leftrightarrow B}\left(\psi\right)
&=& \inf_{\mathcal{E}_{\psi,K,L}^{12}} \left( \log K-\log L \right), \nonumber
\end{eqnarray}
where the quantity $\log K-\log L$ is called the \emph{entanglement cost} of the OSQSE protocol,
and the infimums are taken over all joint protocols
$\mathcal{E}_{\psi,K,L}^1$, $\mathcal{E}_{\psi,K,L}^{1|2}$, and $\mathcal{E}_{\psi,K,L}^{12}$, respectively.

By the definitions of the optimal entanglement costs,
we obtain the following proposition.

\begin{Prop} \label{prop:OEC1} For any input state $\psi$,
$\mathbf{e}_{A_1\leftrightarrow B_1}\left(\psi\right)
\ge\mathbf{e}_{A_1\leftrightarrow B_1}^{A_2B_2}\left(\psi\right)$.
\end{Prop}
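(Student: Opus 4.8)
\emph{Proof proposal.} The plan is to argue that every protocol of the first type is, after a trivial relabelling, also a protocol of the second type, so that the infimum defining $\mathbf{e}_{A_1\leftrightarrow B_1}^{A_2B_2}(\psi)$ ranges over a set containing a copy of the set over which the infimum defining $\mathbf{e}_{A_1\leftrightarrow B_1}(\psi)$ ranges; monotonicity of the infimum then yields the inequality. Concretely, given any OSQSE protocol $\mathcal{E}^1_{\psi,K,L}\colon A_1E_{\mathrm{A}}^{\mathrm{in}}\otimes B_1E_{\mathrm{B}}^{\mathrm{in}}\to B'_1E_{\mathrm{A}}^{\mathrm{out}}\otimes A'_1E_{\mathrm{B}}^{\mathrm{out}}$, I would set
\begin{equation}\label{eq:liftmap}
\tilde{\mathcal{E}}^{1|2}_{\psi,K,L}:=\mathcal{E}^1_{\psi,K,L}\otimes\mathds{1}_{A_2}\otimes\mathds{1}_{B_2},
\end{equation}
i.e. run $\mathcal{E}^1_{\psi,K,L}$ while Alice keeps $A_2$ and Bob keeps $B_2$ untouched. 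Since $A=A_1A_2$ and $B=B_1B_2$, this channel has exactly the input/output systems $AE_{\mathrm{A}}^{\mathrm{in}}\otimes BE_{\mathrm{B}}^{\mathrm{in}}\to B'_1A_2E_{\mathrm{A}}^{\mathrm{out}}\otimes A'_1B_2E_{\mathrm{B}}^{\mathrm{out}}$ demanded of an $\mathcal{E}^{1|2}$ protocol.

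Next I would check the two properties required of $\tilde{\mathcal{E}}^{1|2}_{\psi,K,L}$. First, appending the local identity channels $\mathds{1}_{A_2}$ on Alice's side and $\mathds{1}_{B_2}$ on Bob's side to an LOCC protocol again gives an LOCC protocol, so $\tilde{\mathcal{E}}^{1|2}_{\psi,K,L}$ is LOCC. Second, the correctness relation in Eq.~(\ref{eq:PDs}) follows immediately: using $\mathds{1}_{A_2}\otimes\mathds{1}_{B_2}\otimes\mathds{1}_R=\mathds{1}_{A_2B_2R}$ together with the hypothesis on $\mathcal{E}^1_{\psi,K,L}$,
\begin{align}\label{eq:liftcheck}
\left(\tilde{\mathcal{E}}^{1|2}_{\psi,K,L}\otimes\mathds{1}_R\right)\!\left(\psi\otimes\Psi\right)
&=\left(\mathcal{E}^1_{\psi,K,L}\otimes\mathds{1}_{A_2B_2R}\right)\!\left(\psi\otimes\Psi\right)\nonumber\\
&=\psi_{f_1}\otimes\Phi,
\end{align}
which is precisely the condition an $\mathcal{E}^{1|2}$ protocol must satisfy. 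Hence $\tilde{\mathcal{E}}^{1|2}_{\psi,K,L}$ is a legitimate OSQSE protocol with the same Schmidt ranks $K$ and $L$, and therefore the same entanglement cost $\log K-\log L$, as $\mathcal{E}^1_{\psi,K,L}$.

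Taking the infimum over all $\mathcal{E}^1_{\psi,K,L}$ on both sides then gives $\mathbf{e}_{A_1\leftrightarrow B_1}^{A_2B_2}(\psi)\le\mathbf{e}_{A_1\leftrightarrow B_1}(\psi)$, as claimed. There is no genuinely hard step here: the statement is an inclusion-of-feasible-sets observation. The only point that warrants a line of care is verifying that augmenting an LOCC protocol by local identity channels on the auxiliary registers $A_2,B_2$ neither leaves the LOCC class nor disturbs Eq.~(\ref{eq:PDs}) — that is, that ``leaving $A_2B_2$ alone'' is a bona fide special case of ``being permitted to process $A_2B_2$''.
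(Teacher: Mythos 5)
Your proposal is correct and matches the paper's intent exactly: the paper asserts the proposition follows ``by the definitions of the optimal entanglement costs,'' which is precisely your observation that tensoring $\mathcal{E}^1_{\psi,K,L}$ with $\mathds{1}_{A_2}\otimes\mathds{1}_{B_2}$ yields a valid $\mathcal{E}^{1|2}$ protocol of the same cost, so the second infimum is over a larger feasible set. Nothing is missing.
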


{\it Converse bounds.}---
A real number $r$ is called a \emph{converse bound} of the optimal entanglement cost if
it is upper bounded by the entanglement cost of any OSQSE protocol.
We derive converse bounds of the optimal entanglement costs as follows.

As in the asymptotic scenario~\cite{OW08,LTYAL19},
we consider a one-shot version of the $R$-assisted quantum state exchange task,
in which the reference system $R$ is divided
into two systems $R_{\mathrm{A}}$ and $R_{\mathrm{B}}$,
and then Alice and Bob receive the divided parts $R_{\mathrm{A}}$ and $R_{\mathrm{B}}$,
respectively,
so that the initial state $\ket{\tilde{\psi}}_{A_1B_1A_2B_2R_{\mathrm{A}}R_{\mathrm{B}}}$ is divided
into Alice's parts $AR_{\mathrm{A}}$ and Bob's parts $BR_{\mathrm{B}}$.
This can be realized by using a quantum channel $\mathcal{N}:R\longrightarrow R_{\mathrm{A}}$
and its complementary channel $\mathcal{N}^{c}:R\longrightarrow R_{\mathrm{B}}$~\cite{W13}.
Let $\mathcal{E}_{\tilde{\psi},K,L}$ be an $R$-assisted OSQSE protocol,
\begin{equation*}
\mathcal{E}_{\tilde{\psi},K,L}:
AR_{\mathrm{A}}E_{\mathrm{A}}^{\mathrm{in}}\otimes
BR_{\mathrm{B}}E_{\mathrm{B}}^{\mathrm{in}} \nonumber \longrightarrow B'_1A_2R_{\mathrm{A}}E_{\mathrm{A}}^{\mathrm{out}}\otimes
A'_1B_2R_{\mathrm{B}}E_{\mathrm{B}}^{\mathrm{out}},
\end{equation*}
with the entanglement cost $\log K-\log L$ such that
$\tilde{\psi}_f \otimes {\Phi} =\mathcal{E}_{\tilde{\psi},K,L}
( \tilde{\psi} \otimes {\Psi} )$,
where $\tilde{\psi}_f=
(
\mathds{1}_{A_1\to A'_1}
\otimes
\mathds{1}_{B_1\to B'_1}
\otimes
\mathds{1}_{A_2B_2R_{\mathrm{A}}R_{\mathrm{B}}}
)(\tilde{\psi})$.
Note that $\mathcal{E}_{\tilde{\psi},K,L}$ is an LOCC protocol by Alice and Bob.
Let $\sigma^M$ be the maximally mixed state with rank $M$.
From the majorization condition for LOCC convertibility~\cite{N99,MGH14},
the state $\rho_{B'_1A_2R_{\mathrm{A}}}\otimes\sigma^L_{E_{\mathrm{B}}^{\mathrm{out}}}$
majorizes
the state $\rho_{AR_{\mathrm{A}}}\otimes\sigma^K_{E_{\mathrm{B}}^{\mathrm{in}}}$.
Let $F$ be an additive and Schur concave function~\cite{DH02} 
such that $F(\sigma^M)=\log M$ for any $M$.
From the Schur concavity of the function $F$,
the inequality
$F(
\rho_{B'_1A_2R_{\mathrm{A}}}\otimes\sigma^L_{E_{\mathrm{B}}^{\mathrm{out}}}
)
\le
F(
\rho_{AR_{\mathrm{A}}}\otimes\sigma^K_{E_{\mathrm{B}}^{\mathrm{in}}}
)$
holds.
Since $F$ is additive and $\rho_{B'_1A_2R_{\mathrm{A}}}=\rho_{B_1A_2R_{\mathrm{A}}}$,
it follows that
$\log K - \log L \ge F(\rho_{B_1A_2R_{\mathrm{A}}})-F(\rho_{AR_{\mathrm{A}}})$.
Since any protocol $\mathcal{E}_{\psi,K,L}^{1|2}$
is also an $R$-assisted OSQSE protocol for the initial state $\ket{\psi}$,
we obtain the following theorem.

\begin{Thm} \label{thm:GLB}
For any input state $\psi$, the optimal entanglement cost $\mathbf{e}_{A_1\leftrightarrow B_1}^{A_2B_2}\left(\psi\right)$
is lower bounded by
\begin{equation}
l_{1|2}(\psi)=
\sup_{F,\mathcal{N}}
\left[
F(\mathcal{N}(\psi)_{B_1A_2R_{\mathrm{A}}})-F(\mathcal{N}(\psi)_{AR_{\mathrm{A}}})
\right], \nonumber
\end{equation}
where $F$ is an additive and Schur concave function
such that $F(\sigma^M)=\log M$ for any $M$
and $\mathcal{N}(\rho)$ is a quantum channel from $R$ to $R_{\mathrm{A}}$.
\end{Thm}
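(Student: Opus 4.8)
The plan is to show that the entanglement cost of \emph{every} protocol $\mathcal{E}_{\psi,K,L}^{1|2}$ is at least $F(\mathcal{N}(\psi)_{B_1A_2R_{\mathrm{A}}})-F(\mathcal{N}(\psi)_{AR_{\mathrm{A}}})$ for each admissible $F$ and each channel $\mathcal{N}$, and then to optimize over $F$ and $\mathcal{N}$. First I would fix an arbitrary $\mathcal{E}_{\psi,K,L}^{1|2}$ together with a channel $\mathcal{N}:R\to R_{\mathrm{A}}$ and an isometric dilation $V:R\to R_{\mathrm{A}}R_{\mathrm{B}}$ realizing both $\mathcal{N}$ and its complementary channel $\mathcal{N}^{c}:R\to R_{\mathrm{B}}$. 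Applying $V$ to the reference of $\ket{\psi}$ produces a pure state $\ket{\tilde{\psi}}_{A_1B_1A_2B_2R_{\mathrm{A}}R_{\mathrm{B}}}$ with the same marginal on $A_1B_1A_2B_2$ as $\ket{\psi}$, so $\mathcal{E}_{\psi,K,L}^{1|2}\otimes\mathds{1}_{R_{\mathrm{A}}R_{\mathrm{B}}}$ is a bona fide $R$-assisted OSQSE protocol for $\ket{\tilde{\psi}}$ with unchanged entanglement cost $\log K-\log L$. This reduction is the conceptual heart of the argument: it lets a single $\mathcal{E}^{1|2}$ protocol face an arbitrary split of the reference between Alice and Bob.

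Next I would carry out the majorization argument preceding the statement on the LOCC transformation $\tilde{\psi}\otimes\Psi\to\tilde{\psi}_f\otimes\Phi$. Both sides are pure across the Alice--Bob cut, with Alice holding $AR_{\mathrm{A}}E_{\mathrm{A}}^{\mathrm{in}}$ initially and $B'_1A_2R_{\mathrm{A}}E_{\mathrm{A}}^{\mathrm{out}}$ finally, the reference being fully distributed. By Nielsen's theorem~\cite{N99,MGH14}, the final Schmidt vector majorizes the initial one, so $\rho_{B'_1A_2R_{\mathrm{A}}}\otimes\sigma^L$ majorizes $\rho_{AR_{\mathrm{A}}}\otimes\sigma^K$, using that a rank-$M$ maximally entangled state has marginal $\sigma^M$. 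Applying an additive Schur concave $F$ with $F(\sigma^M)=\log M$, using $\rho_{B'_1A_2R_{\mathrm{A}}}=\rho_{B_1A_2R_{\mathrm{A}}}$, and rearranging then gives $\log K-\log L\ge F(\rho_{B_1A_2R_{\mathrm{A}}})-F(\rho_{AR_{\mathrm{A}}})$, where the marginals are those of $\tilde{\psi}$, equivalently of $\mathcal{N}(\psi)$.

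Finally, since this inequality holds for every admissible $F$ and every channel $\mathcal{N}$, while its left-hand side is the entanglement cost of the fixed but arbitrary protocol, taking the supremum over $(F,\mathcal{N})$ and then the infimum over $\mathcal{E}_{\psi,K,L}^{1|2}$ yields $\mathbf{e}_{A_1\leftrightarrow B_1}^{A_2B_2}(\psi)\ge l_{1|2}(\psi)$. I expect the only points requiring care to be (i) checking that discarding $R_{\mathrm{A}}R_{\mathrm{B}}$ inside $\mathcal{E}^{1|2}$ still reproduces $\tilde{\psi}_f$, which holds because the exchange acts only on $A$ and $B$ and hence commutes with $V$, and (ii) arranging the bipartite cut so that the maximally entangled resources contribute exactly $\log K$ and $\log L$; the error-free requirement on the protocol is what makes Nielsen's theorem applicable cleanly here.
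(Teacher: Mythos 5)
Your proposal is correct and follows essentially the same route as the paper: split the reference via an isometric dilation of $\mathcal{N}$, observe that any $\mathcal{E}^{1|2}$ protocol remains a valid $R$-assisted protocol for the dilated state, apply Nielsen's majorization condition across the Alice--Bob cut, and use additivity and Schur concavity of $F$ before optimizing over $F$ and $\mathcal{N}$. The extra care you take with the dilation step (that $V$ commutes with the exchange map and leaves the cost unchanged) is exactly what the paper compresses into the remark that any $\mathcal{E}^{1|2}_{\psi,K,L}$ is also an $R$-assisted protocol.
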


In Theorem~\ref{thm:GLB},
if $R$ is directly sent to either Alice or Bob without splitting,
and we restrict the function $F$ to the quantum R\'enyi entropy $S_{\alpha}(\varrho)$
of order $\alpha$~\cite{DH02}
for a quantum state $\varrho$,
then we obtain the following computable converse bounds.

\begin{Cor} \label{cor:CLB}
For any input state $\psi$, $\mathbf{e}_{A_1\leftrightarrow B_1}^{A_2B_2}\left(\psi\right)
\ge
l_{1|2}^c(\psi)
=\max_{\alpha\in[0,\infty]} f_{\psi}(\alpha)$,
where $f_{\psi}(\alpha)$ is a function
of $\ket{\psi}$ and $\alpha$ defined by
$f_{\psi}(\alpha)=
\max\{ S_{\alpha}(\rho_{A_1B_2})-S_{\alpha}(\rho_{B}),
S_{\alpha}(\rho_{B_1A_2})-S_{\alpha}(\rho_{A}) \}$.
\end{Cor}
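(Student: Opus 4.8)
The plan is to obtain $l^c_{1|2}(\psi)$ from $l_{1|2}(\psi)$ by merely restricting the two suprema in Theorem~\ref{thm:GLB}: the supremum over admissible functions $F$ to the family of quantum R\'enyi entropies $\{S_\alpha\}_{\alpha\in[0,\infty]}$, and the supremum over channels $\mathcal{N}:R\to R_{\mathrm{A}}$ to the two ``extreme'' choices that send $R$ in its entirety to one of the two parties. Since restricting a supremum can only lower its value, this immediately yields $\mathbf{e}_{A_1\leftrightarrow B_1}^{A_2B_2}(\psi)\ge l_{1|2}(\psi)\ge l^c_{1|2}(\psi)$.

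First I would verify that $S_\alpha$ is an admissible choice of $F$ for every $\alpha\in[0,\infty]$. Writing $S_\alpha(\varrho)=\tfrac{1}{1-\alpha}\log\Tr\varrho^\alpha$ with the usual limiting values (the von Neumann entropy at $\alpha=1$, the log-rank at $\alpha=0$, and the min-entropy at $\alpha=\infty$), one has additivity $S_\alpha(\varrho\otimes\tau)=S_\alpha(\varrho)+S_\alpha(\tau)$, Schur concavity over the whole range of $\alpha$ (Ref.~\cite{DH02}), and $S_\alpha(\sigma^M)=\tfrac{1}{1-\alpha}\log\!\big(M\cdot M^{-\alpha}\big)=\log M$, so all the hypotheses of Theorem~\ref{thm:GLB} are met.

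Next I would evaluate the bound for the two channels. If $R$ is sent intact to Alice, then $R_{\mathrm{A}}\cong R$ and the quantity inside the supremum becomes $S_\alpha(\rho_{B_1A_2R})-S_\alpha(\rho_{AR})$; since $\ket{\psi}_{A_1B_1A_2B_2R}$ is pure, each reduced state has the same spectrum as its complementary marginal, so $S_\alpha(\rho_{B_1A_2R})=S_\alpha(\rho_{A_1B_2})$ and $S_\alpha(\rho_{AR})=S_\alpha(\rho_{B})$, giving $S_\alpha(\rho_{A_1B_2})-S_\alpha(\rho_B)$. If instead $R$ is sent intact to Bob, then $R_{\mathrm{A}}$ is trivial and the quantity is $S_\alpha(\rho_{B_1A_2})-S_\alpha(\rho_A)$ directly. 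Taking the maximum over these two channels produces $f_\psi(\alpha)$, and the residual supremum over $F$ then leaves $\max_{\alpha\in[0,\infty]}f_\psi(\alpha)=l^c_{1|2}(\psi)$, as claimed.

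The only step that is not pure bookkeeping is the Schur concavity of $S_\alpha$ over the full range $\alpha\in[0,\infty]$ --- in particular for $\alpha>1$, where the prefactor $\tfrac{1}{1-\alpha}$ is negative and one must invoke the monotonicity of $\varrho\mapsto\Tr\varrho^\alpha$ under majorization --- but this is a standard property of R\'enyi entropies that I would cite rather than reprove. The passage from a subsystem of a pure state to its complement, and the identification of $B'_1$ with $B_1$, are the routine remaining ingredients.
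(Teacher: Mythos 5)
Your derivation of the inequality is correct and is exactly the restriction argument the paper intends: check that $S_\alpha$ is additive, Schur concave, and satisfies $S_\alpha(\sigma^M)=\log M$, then specialize the channel $\mathcal{N}$ to the two extreme choices and use purity of $\ket{\psi}_{ABR}$ to replace $S_\alpha(\rho_{B_1A_2R})$ by $S_\alpha(\rho_{A_1B_2})$ and $S_\alpha(\rho_{AR})$ by $S_\alpha(\rho_{B})$. The paper treats all of this as immediate and states it only in the main text; its designated proof of the corollary in the appendix is devoted entirely to a point you pass over, namely that the supremum over the non-compact parameter set $[0,\infty]$ is actually \emph{attained}, so that writing $\max_{\alpha\in[0,\infty]}f_\psi(\alpha)$ is legitimate. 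The paper handles this by noting $f_\psi$ is continuous on $[0,1]$, applying the extreme value theorem there, and then reparametrizing via $g(x)=f_\psi(1/x)$ (with $g(0)=f_\psi(\infty)$) to get a second compact interval covering $[1,\infty]$. Your argument gives $\mathbf{e}_{A_1\leftrightarrow B_1}^{A_2B_2}(\psi)\ge f_\psi(\alpha)$ for every $\alpha$ and hence bounds the supremum, so nothing is wrong, but to match the statement as written you should either add a line establishing continuity of $f_\psi$ at the endpoints $\alpha=0$ and $\alpha=\infty$ (i.e.\ $S_\alpha\to S_0$ and $S_\alpha\to S_\infty$ in those limits for finite-dimensional states) so the maximum exists, or replace the max by a sup.
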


We refer the reader to Appendix~\ref{app:pf_CLB} for the proof of Corollary~\ref{cor:CLB}.
Remark that
the converse bound $l_{1|2}^c$
can be easily computed
by means of analytical or numerical methods,
since the function $f_{\psi}(\alpha)$ is one-variable and differentiable on $(0,\infty)$.
For the different types of the OSQSE protocols,
we can also obtain a similar computable converse bound as follows:
\begin{equation} \label{eq:CLB}
\mathbf{e}_{X\leftrightarrow Y}\left(\psi\right)
\ge
\max_{\alpha\in[0,\infty]} \left| S_{\alpha}(\rho_{X})-S_{\alpha}(\rho_{Y}) \right|,
\end{equation}
where the pair  $(X,Y)$ can refer either to $(A_1,B_1)$ or to $(A,B)$.

We also remark that
in Theorem~\ref{thm:GLB},
if $F$ is chosen as the von Neumann entropy~\cite{W13},
then the converse bound $l_{1|2}$ recovers a theoretical converse bound in Refs.~\cite{OW08,LTYAL19}.
In addition,
a computable converse bound therein is just $f_{\psi}(1)$ in Corollary~\ref{cor:CLB}.
By virtue of the additivity of $F$,
it is clear that
$l_{1|2}$ and $l_{1|2}^c$ are also
converse bounds of the optimal entanglement cost
for the asymptotic quantum state exchange task.
Hence,
our converse bounds improve the existing bounds in Refs.~\cite{OW08,LTYAL19}.
For example,
if the initial state $\ket{\psi_1}\equiv\ket{\psi_1}_{A_1B_1A_2B_2R}$ is
\begin{equation} \label{eq:graph}
\ket{\psi_1}
=\mbox{$
\frac{1}{5}\ket{00000}+\sqrt{\frac{3}{50}}\ket{00010}
+\frac{3}{5}\ket{01001}+\sqrt{\frac{27}{50}}\ket{11100}$},
\end{equation}
then we can find a value $\alpha_0\in[0,\infty]$
such that $l_{1|2}^c(\psi_1)=f_{\psi_1}(\alpha_0)>f_{\psi_1}(1)$
as depicted in Fig.~\ref{fig:max_graph}.
This example shows that our  bound $l_{1|2}^c(\psi)$ is tighter
than the existing bound $f_{\psi}(1)$.

\begin{figure}
\centering
\includegraphics[width=.8\linewidth,trim=0cm 0cm 0cm 0cm]{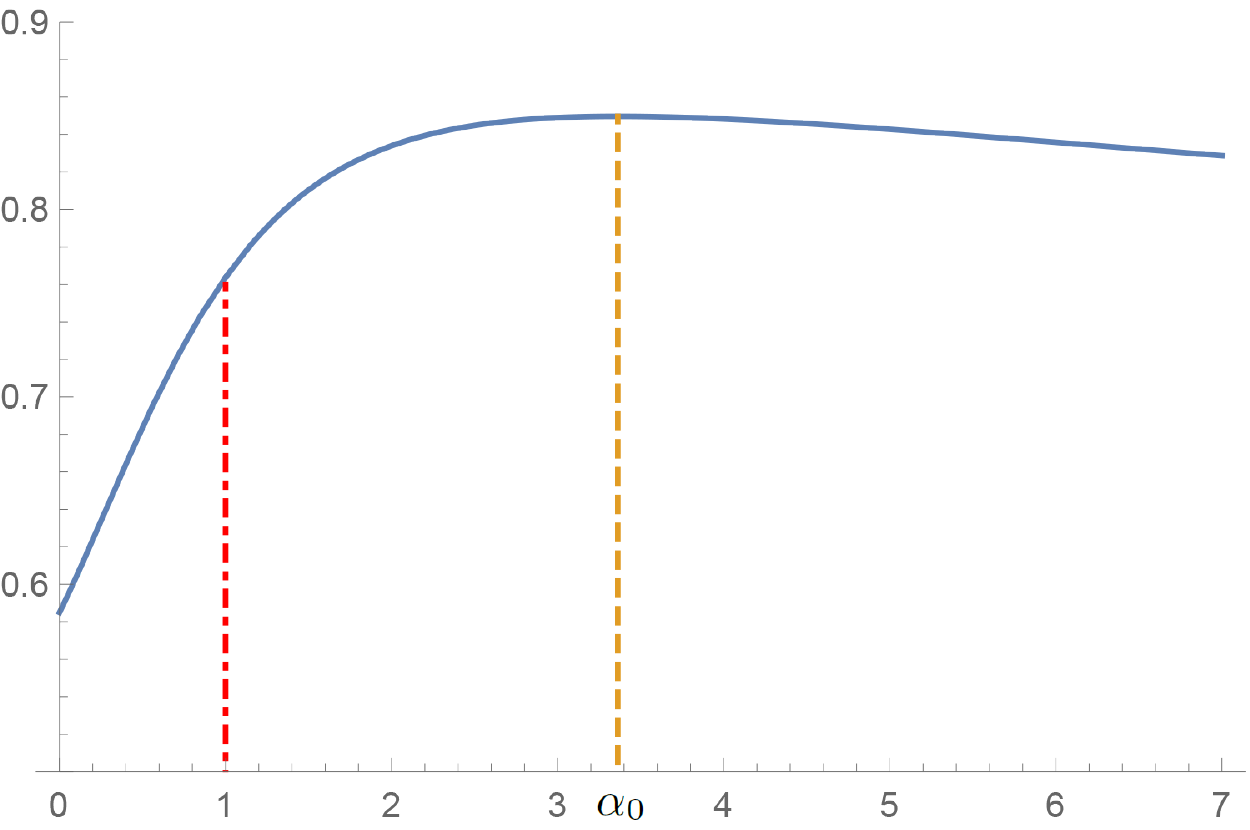}
\caption{
The graph of the function $f_{\psi_1}(\alpha)$
for a specific initial state $\ket{\psi_1}$ in Eq.~(\ref{eq:graph}).
The maximum of $f_{\psi_1}(\alpha)$ is attained
at the point $\alpha_0$ $(\approx 3.362)$.
In the graph,
$\alpha_0$ is represented as the yellow dashed line,
and 1 is represented as the red dashed dotted line.
}
\label{fig:max_graph}
\end{figure}

{\it Conditions for zero entanglement cost.}---
We now present conditions for OSQSE with zero entanglement cost.

By the converse bound in Eq.~(\ref{eq:CLB}),
it is obvious that if there exist Alice's and Bob's local isometries performing the OSQSE task,
then the optimal entanglement cost is zero.
We first characterize this type of strategy.
Let $(X,Y)$ be a pair of two systems,
which can be either $(A_1,B_1)$ or $(A,B)$,
and consider a spectral decomposition
of the reduced state $\rho_{XY}$ for $\ket{\psi}$,
$\rho_{XY}=\sum_{i=1}^{N}\lambda_i\ket{\xi_i}\bra{\xi_i}_{XY}$,
where $\lambda_i>0$ with $\sum_{i=1}^{N}\lambda_i=1$.
For each $i$,
we define the matrix $\Omega_{XY}^i(\psi)$ by
\begin{equation}
\Omega_{XY}^i(\psi)
=\sum_{j,k}\left(\bra{j}_{X}\otimes\bra{k}_{Y}\right)\ket{\xi_i}_{XY} \ket{j}\bra{k}, \nonumber
\end{equation}
where $\{\ket{j}\}$ and $\{\ket{k}\}$ indicate the computational bases on Alice's and Bob's systems,
respectively.
Then we obtain the following sufficient condition.

\begin{Thm} \label{thm:SC0}
Let $(X,Y)$ be either $(A_1,B_1)$ or $(A,B)$.
If there exist isometries $U$ and $V$ such that, for each $i$,
${\left(\Omega_{XY}^i(\psi)\right)}^t=U \Omega_{XY}^i(\psi)V$,
then $\mathbf{e}_{X\leftrightarrow Y}\left(\psi\right)=0$.
\end{Thm}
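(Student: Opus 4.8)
The plan is to combine the converse bound of Eq.~(\ref{eq:CLB}) with an explicit construction of an optimal protocol. Since Eq.~(\ref{eq:CLB}) already gives $\mathbf{e}_{X\leftrightarrow Y}(\psi)\ge\max_{\alpha}\left|S_{\alpha}(\rho_X)-S_{\alpha}(\rho_Y)\right|\ge 0$, it suffices to exhibit a single error-free OSQSE protocol whose entanglement cost equals $0$. I will in fact produce one that consumes no shared entanglement at all (so $K=L=1$): it will consist merely of a local channel $\mathcal{A}$ applied by Alice on $X$ together with a local channel $\mathcal{B}$ applied by Bob on $Y$, and such a product of local channels is trivially an LOCC protocol.

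For the construction, recall the spectral decomposition $\rho_{XY}=\sum_{i}\lambda_i\ketbra{\xi_i}{\xi_i}_{XY}$ and write the induced Schmidt decomposition $\ket{\psi}=\sum_i\sqrt{\lambda_i}\ket{\xi_i}_{XY}\ket{e_i}_{\bar E}$, where $\bar E$ denotes the systems left untouched by the task ($A_2B_2R$ when $(X,Y)=(A_1,B_1)$, and $R$ when $(X,Y)=(A,B)$) and $\{\ket{e_i}\}$ is orthonormal. I would take Alice's channel to be $\mathcal{A}(\cdot)=U(\cdot)U^{\dagger}$, with $U$ viewed as an isometry from $X$ into Alice's target system (the one of dimension $\dim Y$), and Bob's channel $\mathcal{B}$ to have Kraus operator $V^{t}$, viewed as a map from $Y$ to Bob's target system (of dimension $\dim X$), completed with further Kraus operators on the orthogonal complement so that $\mathcal{B}$ is trace-preserving.

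To verify that $\mathcal{A}\otimes\mathcal{B}$ is a genuine protocol realizing $\psi_f$ (that is, $\psi_{f_1}$ or $\psi_{f_{12}}$), I would first confirm that $\mathcal{B}$ acts trace-preservingly on Bob's input marginal $\rho_Y$; this reduces to showing $\Omega_{XY}^{i}(\psi)=\Omega_{XY}^{i}(\psi)\,VV^{\dagger}$ for every $i$, which follows from the hypothesis by a short positivity-and-trace argument. The crux is then the identity that, for each $i$, the vector $\big(U_X\otimes(V^{t})_Y\big)\ket{\xi_i}_{XY}$ has coefficient matrix (in the sense of the definition of $\Omega_{XY}^{i}(\psi)$) equal to $U\,\Omega_{XY}^{i}(\psi)\,V$, which by hypothesis equals $\big(\Omega_{XY}^{i}(\psi)\big)^{t}$; this is exactly $\ket{\xi_i}$ with its $X$-content relocated onto Bob's new system and its $Y$-content onto Alice's new system. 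Since $\bar E$ is untouched and the $\ket{e_i}$ are orthonormal, summing over $i$ with weights $\sqrt{\lambda_i}$ yields $(\mathcal{A}\otimes\mathcal{B}\otimes\mathds{1}_{\bar E})(\psi)=\psi_f$ exactly, so this is an error-free OSQSE protocol with $K=L=1$, whence $\mathbf{e}_{X\leftrightarrow Y}(\psi)\le 0$. Together with the converse bound this gives $\mathbf{e}_{X\leftrightarrow Y}(\psi)=0$.

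The step I expect to carry the real content, as opposed to bookkeeping, is seeing why a single, $i$-independent pair of local maps suffices: Alice and Bob cannot access the label $i$, since it is carried by $\bar E$ (in particular by the reference $R$), so they must implement the exchange coherently and simultaneously across all eigencomponents of $\rho_{XY}$ with one fixed operation each, and the hypothesis that the same pair $(U,V)$ conjugates every $\Omega_{XY}^{i}(\psi)$ into its transpose is precisely what allows this. The remaining points — that $U$ is an honest isometry only when $\dim Y\ge\dim X$ (so the hypothesis is non-vacuous) and that $V^{t}$ must be completed to a channel off $\mathrm{supp}(\rho_Y)$ — are routine dimension and trace-preservation checks that do not affect how the protocol acts on $\ket{\psi}$.
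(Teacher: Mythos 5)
Your proposal is correct and follows essentially the same route as the paper's proof: the lower bound $\mathbf{e}_{X\leftrightarrow Y}(\psi)\ge 0$ from Eq.~(\ref{eq:CLB}) combined with the observation that $(U\otimes V^{t}\otimes\mathds{1})\ket{\psi}$ has, componentwise in the Schmidt decomposition of $\rho_{XY}$, coefficient matrices $U\,\Omega_{XY}^{i}(\psi)\,V=(\Omega_{XY}^{i}(\psi))^{t}$ and hence equals the exchanged state, giving an LOCC protocol with $K=L=1$. Your additional remarks on completing $V^{t}$ to a trace-preserving map and on the dimension constraint are sound refinements of points the paper leaves implicit, but they do not change the argument.
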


Here, the isometries $U$ and $V$ indicate Alice's and Bob's local operations
exchanging the parts $X$ and $Y$ without shared entanglement.
The proof of Theorem~\ref{thm:SC0} is in Appendix~\ref{app:pf_thm:SC0}.

From the converse bound in Eq.~(\ref{eq:CLB}),
observe that if the spectrum of Alice's state is different from that of Bob's state,
then the optimal entanglement cost cannot be zero.
Based on this observation,
we obtain the following theorem,
whose proof can be found in Appendix~\ref{app:pf_thm:SCP}.

\begin{Thm} \label{thm:SCP}
Let $(X,Y)$ be either $(A_1,B_1)$ or $(A,B)$.
If $\mathbf{e}_{X\leftrightarrow Y}\left(\psi\right)=0$,
then there exists an isometry $U_{X\rightarrow Y}$ such that
$\rho_{Y}= U_{X\rightarrow Y}\rho_{X}(U_{X\rightarrow Y})^{\dagger}$.
\end{Thm}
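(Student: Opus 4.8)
The plan is to extract, from the existence of a zero-cost OSQSE protocol, an isometry that maps Alice's reduced state exactly onto Bob's. The starting point is the observation already used in deriving the converse bounds: a zero-cost protocol is an LOCC operation that, when applied to $\psi\otimes\Psi$ with $\Psi$ a maximally entangled state of Schmidt rank $K$, produces $\psi_f\otimes\Phi$ with $\Phi$ maximally entangled of the same rank $L=K$ (since the cost $\log K-\log L$ is an infimum equal to zero, we may—after a limiting/approximation argument, or directly if the infimum is attained—take $K=L$; I would first argue the infimum is attained when it equals zero, or pass to the error-free $K=L$ case which the definitions permit). Because the total entanglement is unchanged, the majorization condition for LOCC convertibility used before Theorem~\ref{thm:GLB} must hold in both directions: the output and input reduced states on one party's side have the same spectrum.

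Concretely, I would run the same argument as in the converse-bound derivation but now exploit that it is tight. Applying the LOCC map and the majorization condition from Refs.~\cite{N99,MGH14} in the forward direction gives $F(\rho_{Y E_{\mathrm B}^{\mathrm{out}}\text{-part}})\le F(\rho_{X E_{\mathrm B}^{\mathrm{in}}\text{-part}})$ for every additive Schur-concave $F$; running it after noting that a zero-cost protocol is reversible (the inverse LOCC protocol is again a zero-cost OSQSE protocol, swapping the roles of initial and final states) gives the opposite inequality. Hence $\rho_X$ and $\rho_Y$ must have identical spectra: $\rho_X=\sum_i \lambda_i\ketbra{e_i}{e_i}_X$ and $\rho_Y=\sum_i \lambda_i\ketbra{f_i}{f_i}_Y$ with the same multiset $\{\lambda_i\}$. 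Since $X$ and $Y$ carry the same dimension by the setup ($X'$ has the dimension of $X$, etc.), the map $U_{X\to Y}=\sum_i \ketbra{f_i}{e_i}$ is a well-defined isometry (in fact a unitary of the appropriate spaces) with $U_{X\to Y}\rho_X U_{X\to Y}^\dagger=\rho_Y$.

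The main obstacle is the reversibility/limiting step: one must justify that a \emph{zero}-cost protocol—possibly only in the infimum sense, and possibly with $K\ne L$ but $\log K-\log L\to 0$, which over dyadic or even integer ranks forces $K=L$ eventually—genuinely yields exact spectral equality of $\rho_X$ and $\rho_Y$, rather than mere asymptotic closeness. I would handle this by noting that the optimal entanglement costs in Eq.~(\ref{eq:costs}) are infimized over integer Schmidt ranks, so $\log K-\log L=0$ is attained whenever the infimum is $0$, giving an honest $K=L$ error-free protocol; then the two-directional majorization is exact, spectra coincide exactly, and the isometry above does the job. A secondary point to check is that tracing out the ancillary and untouched systems is legitimate: $\rho_{B_1'} = \rho_{A_1}$ after the exchange (for the $(A_1,B_1)$ case) or $\rho_{B'}=\rho_A$ (for the $(A,B)$ case) because the exchange relabels systems without altering the global state, so the output marginal on Bob's new system equals the input marginal on Alice's old one, which is exactly the content needed to identify the spectra.
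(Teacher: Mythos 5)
Your overall target---show that $\rho_X$ and $\rho_Y$ have the same spectrum and then build the isometry from matched eigenbases---is the same as the paper's, but the route you take to spectral equality has two genuine gaps. First, the attainment step fails: $\log K-\log L$ over positive integers takes the values $\log(K/L)$, which accumulate at $0$ from above (take $K=n+1$, $L=n$), so $\mathbf{e}_{X\leftrightarrow Y}(\psi)=0$ being an \emph{infimum} does not force the existence of an exact error-free protocol with $K=L$. Your assertion that ``$\log K-\log L=0$ is attained whenever the infimum is $0$'' therefore does not follow, and without an attained protocol you cannot apply Nielsen's majorization criterion exactly. Second, the reversibility claim is unjustified: the inverse of an LOCC channel is generally not LOCC, and nothing in the hypothesis provides a zero-cost protocol for the reverse exchange, so you cannot get the opposite majorization that way. (The legitimate way to get ``both directions'' is not reversibility but the choice of where to send the reference: giving $R$ to Alice in one run of the converse argument and to Bob in the other is exactly what produces the absolute value in Eq.~(\ref{eq:CLB}).)

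The paper's proof avoids both problems. The converse bound of Eq.~(\ref{eq:CLB}) lower-bounds the \emph{infimum} itself by $\max_{\alpha}\left|S_\alpha(\rho_X)-S_\alpha(\rho_Y)\right|$, so $\mathbf{e}_{X\leftrightarrow Y}(\psi)=0$ immediately forces $S_\alpha(\rho_X)=S_\alpha(\rho_Y)$ for every $\alpha\in[0,\infty]$, with no attainment or reversibility needed. The remaining---and genuinely nontrivial---ingredient, which your proposal is missing entirely, is that equality of \emph{all} R\'enyi entropies implies the sorted spectra coincide; the paper proves this as Lemma~\ref{lem:REtoPD} by induction, using $H_0$ to match the support sizes and $H_\infty$ to peel off the largest eigenvalue at each step. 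If you replace your attainment-plus-reversibility argument by the converse bound together with such a lemma (or some other argument that a sufficiently rich family of Schur-concave functions separates spectra), the rest of your construction of $U_{X\to Y}$ from the matched eigendecompositions goes through.
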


We remark that
the converse of Theorem~\ref{thm:SCP} is not true in general.
Let us consider the following simple initial state
\begin{equation}
\ket{\psi_2}_{A_1B_1A_2B_2}
=\mbox{$\frac{1}{2}\left(\ket{0000}+\ket{0101}+\ket{1010}+\ket{1111}\right)$}, \nonumber
\end{equation}
then
$\mathbf{e}_{A_1\leftrightarrow B_1}\left(\psi_2\right)
\ge\mathbf{e}_{A_1\leftrightarrow B_1}^{A_2B_2}\left(\psi_2\right)
\ge 2$,
from Proposition~\ref{prop:OEC1} and Corollary~\ref{cor:CLB}.
However,
the state $\ket{\psi_2}$ satisfies the necessary condition in Theorem~\ref{thm:SCP},
since its reduced states $\rho_{A_1}$ and $\rho_{B_1}$ are identical.

{\it Counter-intuitive phenomena.}---
We are now in the position to present two phenomena
which show the important differences between the OSQSE task and the SWAP operation.

{\it (1) Symmetric information.}---
For the initial state $\ket{\psi}$,
let us consider a scenario in which Alice and Bob exchange their whole information $A$ and $B$.
Assume that their parts $A_2$ and $B_2$ are symmetric,
while the remaining parts $A_1$ and $B_1$ are not symmetric,
i.e.,
the initial state $\ket{\psi}$ satisfies
$\left(\mathrm{SWAP}_{A_1\leftrightarrow B_1}\right)\left(\psi\right)\neq\psi$ and
$\left(\mathrm{SWAP}_{A_2\leftrightarrow B_2}\right)\left(\psi\right)=\psi$
where $\mathrm{SWAP}_{X\leftrightarrow Y}$ is the operation swapping
quantum states in systems $X$ and $Y$.

From a viewpoint of the SWAP operation,
if Alice and Bob want to exchange $A$ and $B$,
then it suffices for them to exchange $A_1$ and $B_1$,
since $A_2$ is identical to $B_2$.
This situation can be more easily understood
by using a cargo exchange as a metaphor for the SWAP operation
as depicted in Fig.~\ref{fig:Cargo}.
In the cargo exchange,
assume that Alice and Bob want to exchange their whole cargoes,
and some of the cargoes are symmetric.
In terms of efficiency,
it is reasonable for them to exchange only $A_1$ and $B_1$
in order to reduce the cargo exchange cost,
because the cargoes $A_2$ and $B_2$ are the same.

\begin{figure}
\centering
\includegraphics[width=.8\linewidth,trim=0cm 0cm 0cm 0cm]{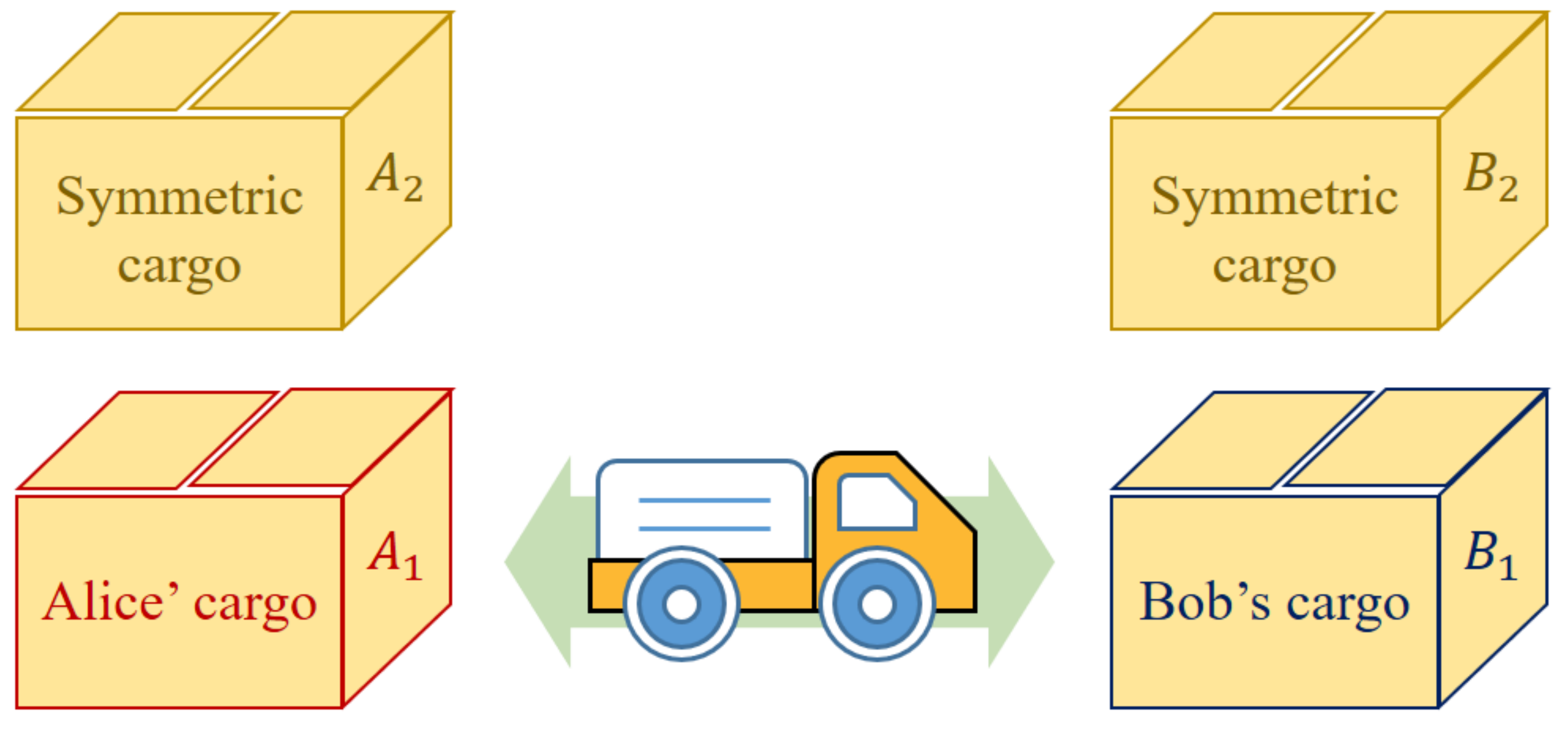}
\caption{
Illustration of the cargo exchange task.
The cargoes $A_1A_2$ and $B_1B_2$ belong to Alice and Bob, respectively.
Assume that the cargoes $A_2$ and $B_2$ are symmetric,
but $A_1$ and $B_1$ are not symmetric.
When Alice and Bob exchange their whole cargoes $A_1A_2$ and $B_1B_2$,
it suffices for them to exchange $A_1$ and $B_1$,
since $A_2$ is identical to $B_2$.
The truck indicates the cost needed for exchanging $A_1$ and $B_1$.
}
\label{fig:Cargo}
\end{figure}

On the other hand,
in the OSQSE,
the proper use of the symmetric parts $A_2$ and $B_2$
can more efficiently reduce the entanglement cost
compared to exchanging only $A_1$ and $B_1$ without using $A_2$ and $B_2$.
To be specific,
there exists an initial state $\ket{\psi}$ such that
the parts $A_2$ and $B_2$ are symmetric
and $\mathbf{e}_{A\leftrightarrow B}\left(\psi\right)=0$
while the rest parts $A_1$ and $B_1$ are not symmetric.
Consider the specific initial state
\begin{equation}
\ket{\phi_1}_{A_1B_1A_2B_2R}
=\mbox{$\frac{1}{\sqrt{2}}
\left(\ket{00000}+\ket{01111} \right)$}, \nonumber
\end{equation}
where $A_2$ and $B_2$ are symmetric but $A_1$ and $B_1$ are not.
Since $\Omega_{AB}^{1}(\phi_1)=\ket{00}\bra{00}$
and $\Omega_{AB}^{2}(\phi_1)=\ket{01}\bra{11}$,
we can show that $\Omega_{AB}^{1}(\phi_1)$ and $\Omega_{AB}^{2}(\phi_1)$ satisfy
the condition in Theorem~\ref{thm:SC0},
by setting
\begin{equation} \label{eq:UV}
U=V
=\ket{00}\bra{00}+\ket{01}\bra{11}+\ket{10}\bra{10}+\ket{11}\bra{01}.
\end{equation}
Thus
we obtain that
$\mathbf{e}_{A\leftrightarrow B}\left(\phi_1\right)=0$,
which means that
$A$ and $B$ can be exchanged by means of LOCC
without consuming any non-local resource.
As mentioned above,
this phenomenon cannot occur when using the SWAP operation.

The above example also shows that
the use of the symmetric parts $A_2$ and $B_2$ can reduce the entanglement cost
for exchanging $A_1$ and $B_1$.
Since the initial state $\ket{\phi_1}$ does not satisfy the necessary condition in Theorem~\ref{thm:SCP},
we obtain $\mathbf{e}_{A_1\leftrightarrow B_1}\left(\phi_1\right)>0$.
Observe that the isometry $U$ ($V$) in Eq.~(\ref{eq:UV})
represents Alice's (Bob's) local operation
$\mathrm{CNOT}_{\mathrm{A}}$ ($\mathrm{CNOT}_{\mathrm{B}}$)
whose target and controlled systems are $A_1$ ($B_1$) and $A_2$ ($B_2$),
respectively.
This implies that Alice and Bob can exchange $A_1$ and $B_1$
by using local operations.
It follows that $0\ge\mathbf{e}_{A_1\leftrightarrow B_1}^{A_2B_2}\left(\phi_1\right)$.
In fact,
$\mathbf{e}_{A_1\leftrightarrow B_1}^{A_2B_2}\left(\phi_1\right)=0$ from Corollary~\ref{cor:CLB}.
Therefore,
we obtain $\mathbf{e}_{A_1\leftrightarrow B_1}\left(\phi_1\right)
>\mathbf{e}_{A_1\leftrightarrow B_1}^{A_2B_2}\left(\phi_1\right)$.

When $A_2$ and $B_2$ are symmetric,
we can show the following relation between the optimal entanglement costs by definition.

\begin{Prop} \label{prop:OEC2}
$\mathbf{e}_{A\leftrightarrow B}\left(\psi\right)
=\mathbf{e}_{A_1\leftrightarrow B_1}^{A_2B_2}\left(\psi\right)$,
if the parts $A_2$ and $B_2$ of $\ket{\psi}$ are symmetric.
\end{Prop}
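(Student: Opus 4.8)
The plan is to prove the identity by showing that, under the symmetry hypothesis $(\mathrm{SWAP}_{A_2\leftrightarrow B_2})(\psi)=\psi$, every protocol of type $\mathcal{E}_{\psi,K,L}^{1|2}$ can be converted into a protocol of type $\mathcal{E}_{\psi,K,L}^{12}$ with the \emph{same} Schmidt ranks $K,L$, and vice versa, by post-composing with cost-free local relabelings of registers; taking the infima in Eq.~(\ref{eq:costs}) over the two (now matched) families of protocols then yields $\mathbf{e}_{A\leftrightarrow B}(\psi)=\mathbf{e}_{A_1\leftrightarrow B_1}^{A_2B_2}(\psi)$. The only facts I use about the symmetry are that it forces $\dim A_2=\dim B_2$, so that the relabeling isometries exist, and that it makes the two target states $\psi_{f_1}$ and $\psi_{f_{12}}$ coincide after the appropriate relabeling.

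First I would show $\mathbf{e}_{A\leftrightarrow B}(\psi)\le\mathbf{e}_{A_1\leftrightarrow B_1}^{A_2B_2}(\psi)$. Take any $\mathcal{E}_{\psi,K,L}^{1|2}$; after it Alice holds $B'_1A_2E_{\mathrm{A}}^{\mathrm{out}}$, Bob holds $A'_1B_2E_{\mathrm{B}}^{\mathrm{out}}$, and the global state is $\psi_{f_1}\otimes\Phi$. Append Alice's identity channel $\mathds{1}_{A_2\to B'_2}$ and Bob's identity channel $\mathds{1}_{B_2\to A'_2}$; this stays LOCC, consumes $\Psi$ and produces $\Phi$ exactly as before, and hence has entanglement cost $\log K-\log L$. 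Now Alice holds $B'_1B'_2E_{\mathrm{A}}^{\mathrm{out}}=B'E_{\mathrm{A}}^{\mathrm{out}}$ and Bob holds $A'_1A'_2E_{\mathrm{B}}^{\mathrm{out}}=A'E_{\mathrm{B}}^{\mathrm{out}}$, matching the signature of $\mathcal{E}_{\psi,K,L}^{12}$; it remains to check the output state is $\psi_{f_{12}}\otimes\Phi$. Applying $\mathds{1}_{A_2\to B'_2}\otimes\mathds{1}_{B_2\to A'_2}$ to $\psi_{f_1}$ gives the state obtained from $\psi$ by the relabeling $A_1\to A'_1$, $B_1\to B'_1$, $A_2\to B'_2$, $B_2\to A'_2$, which differs from $\psi_{f_{12}}$ only by interchanging the output labels $A'_2$ and $B'_2$. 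Since $\psi_{f_{12}}$ is itself a relabeling of $\psi$ and $\psi$ is invariant under $\mathrm{SWAP}_{A_2\leftrightarrow B_2}$, this interchange acts trivially, so the output equals $\psi_{f_{12}}$. Thus the appended protocol is a valid $\mathcal{E}_{\psi,K,L}^{12}$ of the same cost.

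The reverse inequality is handled symmetrically: given any $\mathcal{E}_{\psi,K,L}^{12}$, after which Alice holds $B'_1B'_2E_{\mathrm{A}}^{\mathrm{out}}$ and Bob holds $A'_1A'_2E_{\mathrm{B}}^{\mathrm{out}}$ with global state $\psi_{f_{12}}\otimes\Phi$, append Alice's $\mathds{1}_{B'_2\to A_2}$ and Bob's $\mathds{1}_{A'_2\to B_2}$. The same computation, again using $(\mathrm{SWAP}_{A_2\leftrightarrow B_2})(\psi)=\psi$, shows the output is $\psi_{f_1}\otimes\Phi$ with Alice holding $B'_1A_2E_{\mathrm{A}}^{\mathrm{out}}$ and Bob holding $A'_1B_2E_{\mathrm{B}}^{\mathrm{out}}$, i.e.\ a valid $\mathcal{E}_{\psi,K,L}^{1|2}$ of the same cost. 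Combining the two inequalities proves the proposition.

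I do not expect a genuine obstacle here, since the argument is essentially register bookkeeping; the one point requiring care is to make sure each relabeling is applied by the party that actually holds the register (Alice acts on $A_2$ and on $B'_2$, Bob on $B_2$ and on $A'_2$) and that each relabeling's domain and codomain have equal dimension — both of which hold precisely because $A_2$ and $B_2$ are symmetric. It is worth stressing in the write-up that without this symmetry the relabelings would not carry one target state to the other, so the hypothesis enters essentially.
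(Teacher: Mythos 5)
Your proof is correct and is essentially the argument the paper has in mind: the authors state Proposition~\ref{prop:OEC2} follows ``by definition,'' and your register-relabeling conversion between $\mathcal{E}_{\psi,K,L}^{1|2}$ and $\mathcal{E}_{\psi,K,L}^{12}$ protocols, using $(\mathrm{SWAP}_{A_2\leftrightarrow B_2})(\psi)=\psi$ to identify the two target states, is precisely the bookkeeping that makes that claim rigorous. No gaps.
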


From Proposition~\ref{prop:OEC2},
we can see that,
when Alice and Bob exchange systems $A$ and $B$ of $\ket{\psi}$ with symmetric parts $A_2$ and $B_2$,
they can achieve the optimal entanglement cost
by exchanging only $A_1$ and $B_1$,
making the most of this symmetry.

{\it (2) Negative entanglement cost.}---
As in the asymptotic quantum state exchange task~\cite{OW08,LTYAL19},
there exist initial states to show that
the entanglement cost of the OSQSE task can be negative.
Assume that Alice and Bob exchange the parts $A_1$ and $B_1$
of the initial state
\begin{equation} \label{eq:NOEC}
\ket{\phi_2}_{A_{1}B_{1}A_{2}B_{2}}
=\mbox{$\frac{1}{2}\sum_{i,j=0}^1\ket{i}_{A_{1}}\ket{j}_{B_{1}}\ket{j}_{A_{2}}\ket{i}_{B_{2}}$},
\end{equation}
where $\ket{\phi_2}$ consists of two ebits $\ket{e}_{A_{1}B_{2}}$ and $\ket{e}_{B_{1}A_{2}}$.
To exchange $A_1$ and $B_1$,
both Alice and Bob prepare an ebit,
respectively,
and they locally implement entanglement swapping~\cite{ZZHE93}
by performing two Bell measurements on $A_2$, $B_2$, and the parts of the ebits,
as described in Fig.~\ref{fig:ES}.
Then they can exchange $A_1$ and $B_1$,
and can share two ebits at the same time.
This means that the entanglement cost can be negative.
In fact,
we have $\mathbf{e}_{A_1\leftrightarrow B_1}^{A_2B_2}\left(\phi_2\right)=-2$
from Corollary~\ref{cor:CLB}. This is in stark contrast with the SWAP operation, which cannot lead to creation of shared entanglement between Alice and Bob.

\begin{figure}
\centering
\includegraphics[width=.6\linewidth,trim=0cm 0cm 0cm 0cm]{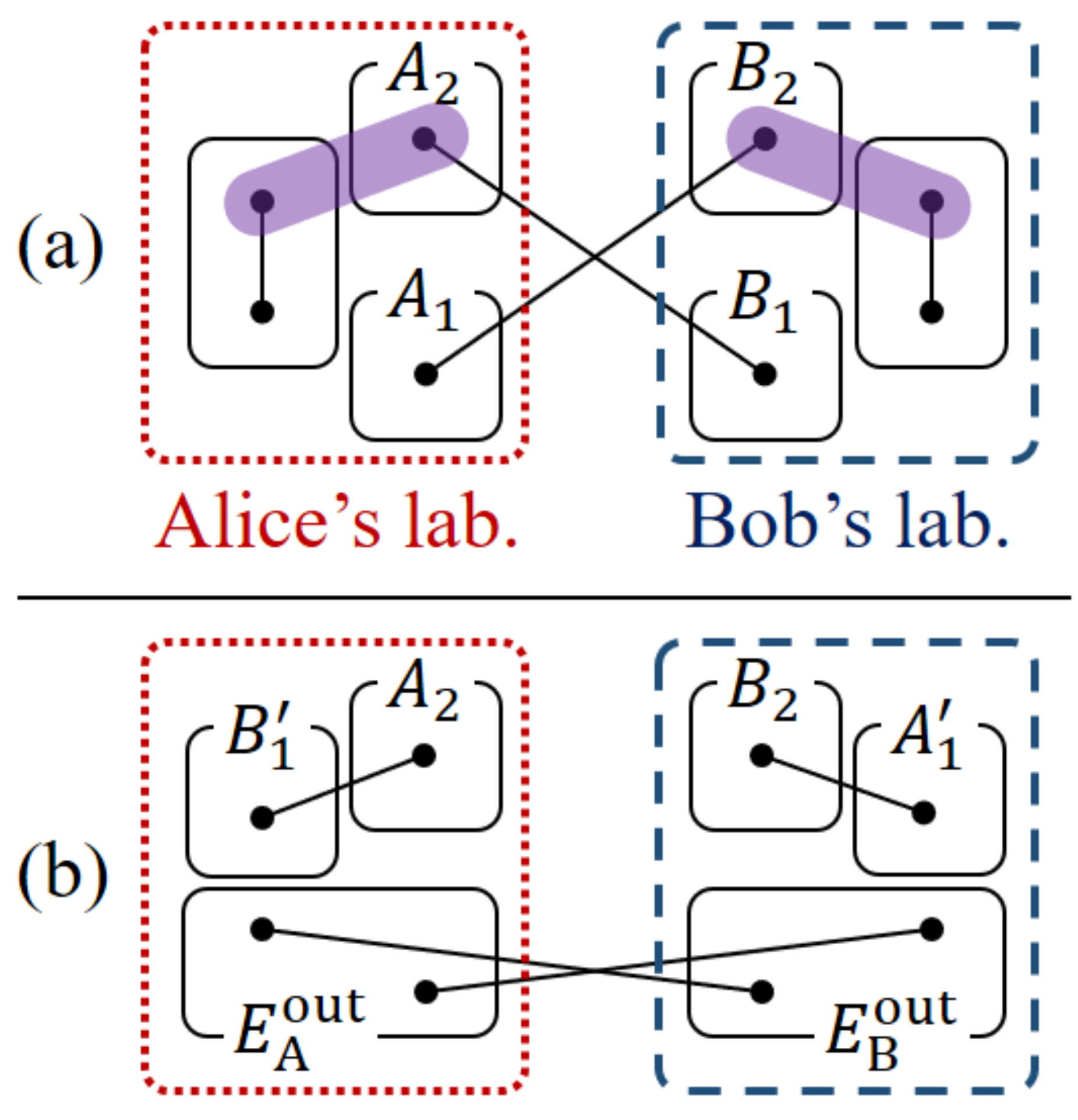}
\caption{
Illustration of the one-shot quantum state exchange protocol of $\ket{\phi_2}$ in Eq.~(\ref{eq:NOEC}).
(a) In order to exchange $A_1$ and $B_1$,
Alice and Bob locally prepare an ebit each, and they apply Bell measurements to the shaded areas.
(b) By performing local operations corresponding to the measurement outcomes,
the parts $A_1$ and $B_1$ can be exchanged.
At the same time,
Alice and Bob can share two ebits.
}
\label{fig:ES}
\end{figure}

From Proposition~\ref{prop:OEC2},
we can know that if $A_2$ and $B_2$ are symmetric,
then $\mathbf{e}_{A_1\leftrightarrow B_1}^{A_2B_2}\left(\psi\right)$ cannot be negative.
One may ask the question: Is there any condition that
implies the non-negativity of the optimal entanglement cost $\mathbf{e}_{A_1\leftrightarrow B_1}^{A_2B_2}$?
To answer this question,
we present the following inequalities.

\begin{Prop}  \label{prop:ineq}
\begin{eqnarray*}
\mathbf{e}_{A_1\leftrightarrow B_1}^{A_2B_2}(\psi)+
\mathbf{e}_{B'_1\leftrightarrow A'_1}^{A_2B_2}(\psi_{f_1})
&\ge&
0, \\
\mathbf{e}_{A_1\leftrightarrow B_1}^{A_2B_2}(\psi)+
\mathbf{e}_{A_2\leftrightarrow B_2}^{B'_1A'_1}(\psi_{f_1})
&\ge&
\mathbf{e}_{A\leftrightarrow B}(\psi),
\end{eqnarray*}
where $\mathbf{e}_{B'_1\leftrightarrow A'_1}^{A_2B_2}(\psi_{f_1})$
is the optimal entanglement cost for exchanging $B'_1$ and $A'_1$ when using $A_2$ and $B_2$,
and
$\mathbf{e}_{A_2\leftrightarrow B_2}^{B'_1A'_1}(\psi_{f_1})$ is
the optimal entanglement cost for exchanging $A_2$ and $B_2$
when using $B'_1$ and $A'_1$.
\end{Prop}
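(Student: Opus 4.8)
The plan is to prove both inequalities by \emph{composing} OSQSE protocols, keeping track of which party holds which subsystem, and then passing to the infima in the definition~(\ref{eq:costs}) of the optimal entanglement costs. Two elementary facts will be used repeatedly: a composition of LOCC protocols is again LOCC, and the tensor product of a maximally entangled state of Schmidt rank $K_1$ with one of Schmidt rank $K_2$ is maximally entangled of Schmidt rank $K_1K_2$; moreover the reference $R$ is untouched by the protocols in Eq.~(\ref{eq:OSQSE}), so it may be assigned to one party's laboratory when convenient.

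For the first inequality I would fix an arbitrary protocol $\mathcal{E}_{\psi,K_1,L_1}^{1|2}$ together with an arbitrary protocol $\mathcal{E}_{\psi_{f_1},K_2,L_2}^{1|2}$ realizing the exchange $B'_1\leftrightarrow A'_1$, both of them using $A_2$ and $B_2$. Applying the first to $\psi$ and a maximally entangled state of rank $K_1$ produces $\psi_{f_1}$ together with a maximally entangled state of rank $L_1$, a state in which Alice holds $B'_1A_2$ and Bob holds $A'_1B_2$; applying the second to this output, supplemented by a fresh maximally entangled state of rank $K_2$, swaps $B'_1$ and $A'_1$ and---after the trivial relabelings built into the task definition---returns the state $\psi$ with Alice holding $A=A_1A_2$ and Bob holding $B=B_1B_2$, accompanied by a maximally entangled state of rank $L_1L_2$. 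Hence the composition is an LOCC transformation of $\psi\otimes\Psi$ into $\psi\otimes\Phi$ that leaves $\psi$ invariant, with $\Psi$ of rank $K_1K_2$ and $\Phi$ of rank $L_1L_2$. Assigning $R$ to, say, Bob, this is a transformation between two pure bipartite states; by the majorization condition for LOCC convertibility~\cite{N99,MGH14} and the Schur concavity and additivity of the von Neumann entropy $F$ (normalized so that $F(\sigma^M)=\log M$), exactly as in the reasoning preceding Theorem~\ref{thm:GLB}, one obtains $\log(K_1K_2)\ge\log(L_1L_2)$---the entropy contributions of $\psi$ cancelling since $\psi$ is returned unchanged. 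Thus the cost of the composition, which equals $(\log K_1-\log L_1)+(\log K_2-\log L_2)$, is non-negative; since this holds for every admissible pair of protocols, taking the infimum over each factor yields $\mathbf{e}_{A_1\leftrightarrow B_1}^{A_2B_2}(\psi)+\mathbf{e}_{B'_1\leftrightarrow A'_1}^{A_2B_2}(\psi_{f_1})\ge 0$.

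For the second inequality the construction is identical except that the second protocol now realizes the exchange $A_2\leftrightarrow B_2$ of $\psi_{f_1}$ using $B'_1$ and $A'_1$: after it, Alice holds $B'_1B'_2=B'$ and Bob holds $A'_1A'_2=A'$, and the output state is exactly $\psi_{f_{12}}$. Therefore the composition is itself an OSQSE protocol of the type $\mathcal{E}_{\psi,K_1K_2,L_1L_2}^{12}$ appearing in Eq.~(\ref{eq:OSQSE}), with entanglement cost $(\log K_1-\log L_1)+(\log K_2-\log L_2)$; consequently $\mathbf{e}_{A\leftrightarrow B}(\psi)$ is no larger than this cost, and taking the infimum over the two protocols separately gives $\mathbf{e}_{A\leftrightarrow B}(\psi)\le\mathbf{e}_{A_1\leftrightarrow B_1}^{A_2B_2}(\psi)+\mathbf{e}_{A_2\leftrightarrow B_2}^{B'_1A'_1}(\psi_{f_1})$.

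The routine part of this argument is the bookkeeping of the relabelings $A_1\leftrightarrow A'_1$, $B_1\leftrightarrow B'_1$, $A_2\leftrightarrow A'_2$, $B_2\leftrightarrow B'_2$ and the verification that the input and output systems of the composed map match the protocol types in Eq.~(\ref{eq:OSQSE}). The only genuinely non-routine ingredient, needed for the first inequality, is the claim that a ``round trip'' returning $\psi$ unchanged has non-negative entanglement cost; this is precisely the statement that LOCC cannot create net entanglement while leaving $\psi$ intact, and I expect it to be the main point to write carefully---it is handled by the same majorization and Schur-concavity argument used for the converse bounds, now applied to the identity relabeling so that the entropy contributions of $\psi$ cancel.
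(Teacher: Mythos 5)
Your proof is correct and takes essentially the same route as the paper, which justifies the first inequality by the impossibility of increasing entanglement under LOCC applied to the round-trip composition, and the second by composing the two protocols into a protocol of type $\mathcal{E}_{\psi,K_1K_2,L_1L_2}^{12}$ and invoking the definition of $\mathbf{e}_{A\leftrightarrow B}$. Your write-up merely makes explicit the bookkeeping and the majorization step that the paper leaves as a one-sentence remark.
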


In Proposition~\ref{prop:ineq},
the first inequality comes from the fact that
Alice and Bob cannot increase the amount of entanglement between them
by means of LOCC~\cite{VPRK97},
while the second one is straightforward from the definitions
of the optimal entanglement costs.
From Proposition~\ref{prop:ineq},
we can see that
if $\mathbf{e}_{B'_1\leftrightarrow A'_1}^{A_2B_2}(\psi_{f_1})$
or $\mathbf{e}_{A_2\leftrightarrow B_2}^{B'_1A'_1}(\psi_{f_1})$ is non-positive,
then $\mathbf{e}_{A_1\leftrightarrow B_1}^{A_2B_2}(\psi)$ cannot be negative.
Moreover, if the condition $\mathbf{e}_{A_2\leftrightarrow B_2}^{B'_1A'_1}(\psi_{f_1})
\le
\mathbf{e}_{A\leftrightarrow B}(\psi)$ holds,
then Proposition~\ref{prop:ineq}
implies $\mathbf{e}_{A_1\leftrightarrow B_1}^{A_2B_2}(\psi)\ge0$.

In particular,
let us assume that $A_1$ and $B_1$ are symmetric.
Then
it is obvious that $0\ge \mathbf{e}_{A_1\leftrightarrow B_1}^{A_2B_2}(\psi)$,
from Proposition~\ref{prop:OEC1}.
If $0>\mathbf{e}_{A_1\leftrightarrow B_1}^{A_2B_2}(\psi)$
then it follows from Proposition~\ref{prop:ineq} that
$\mathbf{e}_{B'_1\leftrightarrow A'_1}^{A_2B_2}(\psi_{f_1}) >0$.
However,
since $B'_1$ and $A'_1$ are also symmetric,
Proposition~\ref{prop:OEC1} implies
$\mathbf{e}_{B'_1\leftrightarrow A'_1}^{A_2B_2}(\psi_{f_1}) \le 0$,
which leads to a contradiction.
Therefore,
we obtain the following corollary.

\begin{Cor} \label{cor:zero}
$\mathbf{e}_{A_1\leftrightarrow B_1}^{A_2B_2}\left(\psi\right)=0$,
if $A_1$ and $B_1$ 
are symmetric.
\end{Cor}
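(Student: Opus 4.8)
The plan is to prove the two bounds $\mathbf{e}_{A_1\leftrightarrow B_1}^{A_2B_2}(\psi)\le 0$ and $\mathbf{e}_{A_1\leftrightarrow B_1}^{A_2B_2}(\psi)\ge 0$ separately and combine them. For the upper bound, the idea is that exchanging two symmetric parts costs nothing: when $(\mathrm{SWAP}_{A_1\leftrightarrow B_1})(\psi)=\psi$, the protocol in which Alice simply relabels her system $A_1$ as $B'_1$ and Bob relabels his system $B_1$ as $A'_1$ is an LOCC map with $K=L=1$, and using the symmetry one checks that it reproduces exactly $\psi_{f_1}$. This is a valid protocol of type $\mathcal{E}^{1}$ (and also of type $\mathcal{E}^{1|2}$, not using $A_2,B_2$), so $\mathbf{e}_{A_1\leftrightarrow B_1}(\psi)\le 0$ and hence, by Proposition~\ref{prop:OEC1}, $\mathbf{e}_{A_1\leftrightarrow B_1}^{A_2B_2}(\psi)\le 0$.

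For the lower bound I would argue by contradiction: suppose $\mathbf{e}_{A_1\leftrightarrow B_1}^{A_2B_2}(\psi)<0$. Since $\psi_{f_1}$ is just $\psi$ with $A_1,B_1$ renamed to $A'_1,B'_1$, the symmetry of $(A_1,B_1)$ in $\psi$ becomes symmetry of $(B'_1,A'_1)$ in $\psi_{f_1}$; applying the upper-bound argument above to $\psi_{f_1}$ and the pair $(B'_1,A'_1)$ gives $\mathbf{e}_{B'_1\leftrightarrow A'_1}^{A_2B_2}(\psi_{f_1})\le 0$. On the other hand, the first inequality of Proposition~\ref{prop:ineq} forces $\mathbf{e}_{B'_1\leftrightarrow A'_1}^{A_2B_2}(\psi_{f_1})\ge -\mathbf{e}_{A_1\leftrightarrow B_1}^{A_2B_2}(\psi)>0$, a contradiction. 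Hence $\mathbf{e}_{A_1\leftrightarrow B_1}^{A_2B_2}(\psi)\ge 0$, and together with the upper bound the cost equals $0$.

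I expect no real obstacle, only two bookkeeping checks. First, one has to verify carefully that the trivial relabeling protocol really produces the state $\psi_{f_1}$: written at the level of the global pure state on $A_1B_1A_2B_2R$, this amounts to interchanging the two dummy summation indices carried by $A_1$ and $B_1$ and invoking $(\mathrm{SWAP}_{A_1\leftrightarrow B_1})(\psi)=\psi$. Second, one must confirm that this symmetry transfers to $\psi_{f_1}$ for the pair $(B'_1,A'_1)$, which is immediate since $\psi_{f_1}$ is a mere renaming of subsystems. Everything else is a direct combination of Propositions~\ref{prop:OEC1} and~\ref{prop:ineq}, exactly as indicated in the discussion preceding the statement.
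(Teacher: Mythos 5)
Your proposal is correct and follows essentially the same route as the paper: the upper bound comes from the trivial relabeling protocol for symmetric $A_1,B_1$ combined with Proposition~\ref{prop:OEC1}, and the lower bound is the same contradiction argument using the first inequality of Proposition~\ref{prop:ineq} together with the observation that the symmetry carries over to $(B'_1,A'_1)$ in $\psi_{f_1}$. Your two ``bookkeeping checks'' are exactly the points the paper leaves implicit, and both go through as you describe.
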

This tells us that
if $A_1$ and $B_1$ are symmetric,
Alice and Bob cannot increase the amount of shared entanglement
after the OSQSE task,
even if they make use of the parts $A_2$ and $B_2$.

{\it Conclusion.}---
In this work,
we have considered a one-shot version of the original quantum state exchange task,
and have formally defined the OSQSE task and its optimal entanglement costs.
We have derived converse bounds on the optimal entanglement costs,
and have presented conditions on the initial state to achieve zero entanglement cost.
As a related open problem,
we can ask the following question:
If $\mathbf{e}_{A\leftrightarrow B}\left(\psi\right)=0$,
then is it possible to exchange the parts $A$ and $B$,
without classical communication and entanglement,
that is,
are there local operations $L_{\mathrm{A}}$ and $L_{\mathrm{B}}$
such that
$
\psi_{f_{12}}=\left(L_{\mathrm{A}}\otimes L_{\mathrm{B}}\right)(\psi)
$
?

The two counter-intuitive phenomena are the most interesting contribution of this work,
showing the major difference
between the SWAP operation and the OSQSE.
One phenomenon tells us that it is worth using the symmetric parts
in order to optimally perform the OSQSE.
The other shows that
the entanglement cost of the OSQSE can be negative.
By observing the aforementioned examples involving the phenomena,
we can provide another interesting open problem:
If $\mathbf{e}_{A_1\leftrightarrow B_1}^{A_2B_2}\left(\psi\right)\le0$,
do there exist Alice's and Bob's local operations $L'_{\mathrm{A}}$ and $L'_{\mathrm{B}}$
such that
$\psi_{f_1} \otimes {\Phi} = \left(L'_{\mathrm{A}}\otimes L'_{\mathrm{B}}\right)(\psi)$?

A further open problem is whether
the catalytic use of entanglement~\cite{JP99,EW00,MBDRC17}
can reduce the optimal entanglement cost for the OSQSE.
To be more specific,
for the initial state $\ket{\psi}$,
do there exist a bipartite entangled state $\ket{\psi_{\mathrm{c}}}_{A_3B_3}$
shared by Alice and Bob
and a OSQSE protocol
$\mathcal{C}_{K,L}:
AA_3E_{\mathrm{A}}^{\mathrm{in}}\otimes
BB_3E_{\mathrm{B}}^{\mathrm{in}}
\longrightarrow B'A_3E_{\mathrm{A}}^{\mathrm{out}}\otimes
A'B_3E_{\mathrm{B}}^{\mathrm{out}}$ 
such that
$\psi_{f_{12}} \otimes \psi_{\mathrm{c}} \otimes {\Phi}
=
\left(\mathcal{C}_{K,L}\otimes\mathds{1}_{R}\right)
\left( \psi \otimes \psi_{\mathrm{c}} \otimes \Psi \right)$
and
$\log K-\log L<\mathbf{e}_{A\leftrightarrow B}\left(\psi\right)$?

Theoretically,
the OSQSE is a powerful two-user quantum communication task,
which includes quantum teleportation~\cite{BBCJPW93}
and quantum state merging~\cite{HOW05,HOW06}
as special cases.
Practically,
this task can be a fundamental building block for applications involving multiple users,
such as distributed quantum computation~\cite{CEHM99,BDLMSS04}
and quantum network~\cite{CZKM97,AML16,PWD18}.

\begin{acknowledgments}
We would like to thank Ryuji Takagi and Bartosz Regula for fruitful discussion.
This research was supported
by Basic Science Research Program through the National Research Foundation of
Korea (NRF) funded by the Ministry of Science and ICT (NRF-2019R1A2C1006337) and the MSIT (Ministry
of Science and ICT), Korea, under the ITRC (Information Technology Research Center) support program
(IITP-2019-2018-0-01402) supervised
by the IITP (Institute for Information \& communications Technology Promotion).
H. Y. acknowledges Grant-in-Aid for JSPS Research Fellow, JSPS KAKENHI Grant No. 18J10192,
Cross-ministerial Strategic Innovation Promotion Program (SIP) (Council for Science,
Technology and Innovation (CSTI)), and CREST (Japan Science and Technology Agency) JPMJCR1671.
G. A. acknowledges support from the ERC Starting Grant GQCOP (Grant Agreement No. 637352).
\end{acknowledgments}

\bibliography{OSQSE}

\clearpage
\newpage
\appendix
\setcounter{equation}{0}
\setcounter{page}{1}



\section{Proof of Corollary~\ref{cor:CLB}} \label{app:pf_CLB}

We show that there exists a number $\alpha_0\in[0,\infty]$
such that $l_{1|2}^c(\psi)=f_{\psi}(\alpha_0)$.
Note that
the function $f_{\psi}(\alpha)$ is continuous on the compact set $[0,1]$.
So the extreme value theorem implies
that there exists a number $\alpha_1\in[0,1]$ such that $f_{\psi}(\alpha_1)\ge f_{\psi}(\alpha)$
for all $\alpha\in[0,1]$.
Let us consider the function $g(x)$ on the interval $[0,1]$ defined as
\begin{equation}
g(x)=
\begin{cases}
f_{\psi}(\infty)
&\text{if $x=0$} \\
f_{\psi}(\frac{1}{x})
&\text{otherwise},
\end{cases} \nonumber
\end{equation}
then $g(x)$ is continuous on $[0,1]$.
By using the extreme value theorem again,
there exists a number $x_0\in[0,1]$ such that $g(x_0)\ge g(x)$
for all $x\in[0,1]$.
It follows that
there exists a number $\alpha_2\in[1,\infty]$ such that $f_{\psi}(\alpha_2)\ge f_{\psi}(\alpha)$
for all $\alpha\in[1,\infty]$.
By setting $\alpha_0=\max\left\{\alpha_1,\alpha_2\right\}$,
we obtain that
$l_{1|2}^c(\psi)=f_{\psi}(\alpha_0)\ge f_{\psi}(\alpha)$
for all $\alpha\in[0,\infty]$.

\section{Proof of Theorem~\ref{thm:SC0}} \label{app:pf_thm:SC0}

When $X=A$ and $Y=B$,
consider the Schmidt decompositions of $\ket{\psi}$,
\begin{equation}
\ket{\psi}_{ABR}
=
\sum_{i=1}^{N}\sqrt{\lambda_i}\ket{\xi_i}_{AB}\otimes\ket{\iota_i}_{R}, \nonumber
\end{equation}
where $\lambda_i>0$ with $\sum_{i=1}^{N}\lambda_i=1$.
For the computational bases $\{\ket{j}\}$ and $\{\ket{k}\}$ on the systems $A$ and $B$,
respectively,
we have
\begin{eqnarray}
&&\ket{\psi}_{ABR} \nonumber \\
&&=
\sum_{i=1}^{N}\sqrt{\lambda_i}\sum_{j,k}[\Omega_{AB}^i(\psi)]_{jk}\ket{j}_{A}\otimes\ket{k}_{B}\otimes\ket{\iota_i}_{R}, \nonumber
\end{eqnarray}
where $[\Omega_{AB}^i(\psi)]_{jk}
=\left(\bra{j}_{A}\otimes\bra{k}_{B}\right)\ket{\xi_i}_{AB}$.
If the parts $A$ and $B$ are perfectly exchanged,
then Alice and Bob hold the final state
\begin{eqnarray}
&&\ket{\psi}_{BAR} \nonumber \\
&&=
\sum_{i=1}^{N}\sqrt{\lambda_i}\sum_{j,k}[\Omega_{AB}^i(\psi)]_{kj}\ket{j}_{B}\otimes\ket{k}_{A}\otimes\ket{\iota_i}_{R}, \nonumber
\end{eqnarray}
By the hypothesis,
there exist isometries $U$ and $V$ such that for each $i$,
\begin{equation}
{\left(\Omega_{AB}^i(\psi)\right)}^t=U \Omega_{AB}^i(\psi)V. \nonumber
\end{equation}
So we have, for each $i$,
\begin{equation}
[\Omega_{AB}^i(\psi)]_{kj}
=\sum_{l,m}[\Omega_{AB}^i(\psi)]_{lm}
\bra{j}U\ket{l}\bra{k}V^t\ket{m}, \nonumber
\end{equation}
which implies that
\begin{eqnarray}
&&\ket{\psi}_{BAR} \nonumber \\
&&=
\sum_{i=1}^{N}\sqrt{\lambda_i}\sum_{l,m}[\Omega_{AB}^i(\psi)]_{lm}
\sum_{j}\ket{j}\bra{j}U\ket{l} \nonumber \\
&&\quad\otimes\sum_{k}\ket{k}\bra{k}V^t\ket{m}\otimes\ket{\iota_i}_{R}, \nonumber \\
&&=
\sum_{i=1}^{N}\sqrt{\lambda_i}\sum_{l,m}[\Omega_{AB}^i(\psi)]_{lm}
U\ket{l}\otimes V^t\ket{m}\otimes\ket{\iota_i}_{R}, \nonumber \\
&&=\left(U\otimes V^t\otimes I_R\right)\ket{\psi}_{ABR}. \nonumber
\end{eqnarray}
Hence,
$\mathbf{e}_{A\leftrightarrow B}\left(\psi\right)=0$.

Similarly,
we can show that $\mathbf{e}_{A_1\leftrightarrow B_1}\left(\psi\right)=0$
by using isometries $U'$ and $V'$ such that for each $i$,
${\left(\Omega_{A_1B_1}^i(\psi)\right)}^t=U' \Omega_{A_1B_1}^i(\psi)V'$.

\section{Proof of Theorem~\ref{thm:SCP}} \label{app:pf_thm:SCP}

We use the following lemma in order to prove Theorem~\ref{thm:SCP}.

\begin{Lem} \label{lem:REtoPD}
Let $Z$ and $W$ be any discrete random variables
on alphabets $\mathcal{Z}$ and $\mathcal{W}$
with $|\mathcal{Z}|=N$ and $|\mathcal{W}|=M$.
Let $\{p_i\}_{i=1}^{N}$ and $\{q_i\}_{i=1}^{M}$ be probability distributions for $X$ and $Y$,
respectively.
If the following equality holds for all $\alpha\in[0,\infty]$,
\begin{equation*}
H_\alpha(Z)= H_\alpha(W),
\end{equation*}
where $H_\alpha(\cdot)$ is the R\'enyi entropy of classical random variables,
then $|\mathcal{Z}|=|\mathcal{W}|$
and there exists a permutation $\sigma\in S_{N}$
such that $p_i=q_{\sigma(i)}$ for all $i\in[N]$,
where $S_N$ is the set of all permutations on $[N]=\{1,\cdots,N\}$.
\end{Lem}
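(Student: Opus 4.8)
The plan is to read off, from the one-parameter family of equalities $H_\alpha(Z)=H_\alpha(W)$, two separate pieces of data: the size of the support (from the order $\alpha=0$) and the multiset of probability weights (from the integer orders $\alpha\ge 2$). Combining the two gives exactly the asserted conclusion.

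First I would use the order-zero (Hartley) entropy. Since $H_0$ of a distribution equals the logarithm of the number of outcomes carrying positive probability, and the tuples $\{p_i\}_{i=1}^{N}$ and $\{q_i\}_{i=1}^{M}$ are taken with strictly positive entries, the equality $H_0(Z)=H_0(W)$ forces $N=M$; this settles the claim $|\mathcal{Z}|=|\mathcal{W}|$. Next, for every $\alpha\in(1,\infty)$ one has $H_\alpha(Z)=\frac{1}{1-\alpha}\log\sum_{i=1}^{N}p_i^{\alpha}$, and likewise for $W$, so the hypothesis $H_\alpha(Z)=H_\alpha(W)$ is equivalent to the power-sum identity $\sum_{i=1}^{N}p_i^{\alpha}=\sum_{i=1}^{N}q_i^{\alpha}$. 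Specializing to $\alpha=2,3,\dots,N$, and adding the trivial normalization $\sum_i p_i=\sum_i q_i=1$, I obtain that the first $N$ power sums of $(p_1,\dots,p_N)$ and $(q_1,\dots,q_N)$ agree. Over $\mathbb{R}$, Newton's identities let one solve recursively for the first $N$ elementary symmetric polynomials in terms of the first $N$ power sums, so $e_k(p)=e_k(q)$ for $k=1,\dots,N$; hence the monic polynomials $\prod_{i=1}^{N}(x-p_i)$ and $\prod_{i=1}^{N}(x-q_i)$ coincide. Matching their roots with multiplicity shows that the multisets $\{p_i\}$ and $\{q_i\}$ are identical, i.e.\ there is a permutation $\sigma\in S_N$ with $p_i=q_{\sigma(i)}$ for all $i\in[N]$, as required.

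This is essentially routine bookkeeping, so I do not expect a deep obstacle; the only point deserving a line of care is the reduction from ``equality for all $\alpha$'' to finitely many polynomial identities. A self-contained alternative, avoiding Newton's identities, is to note that the functions $\alpha\mapsto t^{\alpha}$ for distinct $t\in(0,1]$ are linearly independent on any interval, so the real-analytic identity $\sum_i p_i^{\alpha}=\sum_j q_j^{\alpha}$ already forces the weighted supports to agree as multisets, after which $N=M$ finishes the argument. One should also keep in mind that the conclusion genuinely uses strict positivity of the $p_i$ and $q_i$ (so that $\alpha=0$ returns the cardinality rather than just a lower bound), which holds in the intended application since these numbers arise as the nonzero eigenvalues of a reduced state.
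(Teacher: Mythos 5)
Your proof is correct, but it takes a genuinely different route from the paper's. The paper argues by induction on $N$: it uses $H_\infty$ (the min-entropy) to conclude that the largest probabilities agree, $p_1=q_1$, then renormalizes the remaining $k-1$ weights by $1-p_1$ and verifies that all R\'enyi entropies of the renormalized distributions still coincide (treating $\alpha\in(0,1)\cup(1,\infty)$, $\alpha=1$, and $\alpha=\infty$ separately, the last via a limit), so that the induction hypothesis applies. You instead extract $N=M$ from $H_0$ and then use only the integer orders $\alpha=1,2,\dots,N$ to match the first $N$ power sums, recover the elementary symmetric polynomials via Newton's identities (valid over $\mathbb{R}$ since one can divide by $k$), and conclude that $\prod_i(x-p_i)=\prod_i(x-q_i)$, hence the multisets coincide. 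Both arguments are sound. Your algebraic route is more self-contained and avoids the paper's limit bookkeeping at $\alpha=\infty$ after renormalization; it also shows that finitely many orders ($\alpha=0,1,2,\dots,N$) already suffice, which is a different refinement from the one the paper records in its remark (where the distinguished set is $\{0,\infty\}$ together with an unbounded sequence, independent of $N$). Your closing caveats are well placed: strict positivity of the weights is exactly what makes $H_0$ return the cardinality, and the reduction of the continuum of hypotheses to finitely many polynomial identities is the only step needing explicit care.
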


Note that $H_{\alpha}(Z)=\lim_{x\to\alpha}H_{x}(Z)$
and $S_{\alpha}(\rho_A)=\lim_{x\to\alpha}S_{x}(\rho_A)$
for each $\alpha\in[0,\infty]$.

\begin{proof}
Suppose that
$H_\alpha(Z)= H_\alpha(W)$ for all $\alpha\in[0,\infty]$.
Since $H_0(Z)= H_0(W)$,
it holds that
$\left|\mathcal{Z}\right|=\left|\mathcal{W}\right|$.
For convenience,
we assume that any probability distribution $\{r_i\}_{i=1}^{N}$
satisfies $r_1\ge r_i$ for all $i\in[N]$.

We now prove the statement by using mathematical induction on $N$.

(i)
If $N=2$,
then $H_\infty(Z)= H_\infty(W)$ implies $p_1=q_1$
and so $p_2=1-p_1=1-q_1=q_2$.
Thus the statement is true.

(ii)
Suppose that the statement is true for $N=k-1$.
Let $Z_1$ and $Z_2$ be discrete random variables
on alphabets $\mathcal{Z}_1$ and $\mathcal{Z}_2$
with $|\mathcal{Z}_1|=|\mathcal{Z}_2|=k$.
Let $\{p_i\}_{i=1}^{k}$ and $\{q_i\}_{i=1}^{k}$
be probability distributions for $Z_1$ and $Z_2$,
respectively.
Since $H_\infty(Z_1)= H_\infty(Z_2)$, $p_1=q_1$.
By setting $p'_i=\frac{p_{i+1}}{1-p_1}$
and $q'_i=\frac{q_{i+1}}{1-p_1}$
for each $i\in[k-1]$,
we can construct random variables $Z'_1$ and $Z'_2$
on alphabets $\mathcal{Z}'_1$ and $\mathcal{Z}'_2$
whose probability distributions are
$\{p'_i\}_{i=1}^{k-1}$ and $\{q'_i\}_{i=1}^{k-1}$,
respectively.
Obviously, $\left|\mathcal{Z}'_1\right|=\left|\mathcal{Z}'_2\right|=k-1$,
and so $H_0(Z'_1)= H_0(Z'_2)$.
Observe that
for $\alpha\in(0,1)\cup(1,\infty)$
\begin{eqnarray}
&&H_\alpha(Z_1)= H_\alpha(Z_2) \nonumber \\
&\implies&
\frac{1}{1-\alpha}\log \left(\sum_{i=1}^{k}p_i^{\alpha}\right)
=\frac{1}{1-\alpha}\log \left(\sum_{i=1}^{k}q_i^{\alpha}\right) \nonumber \\
&\implies&
\sum_{i=2}^{k}p_i^{\alpha}
=\sum_{i=2}^{k}q_i^{\alpha} \nonumber \\
&\implies&
\sum_{i=1}^{k-1}\left(\frac{p_{i+1}}{1-p_1}\right)^{\alpha}
=\sum_{i=1}^{k-1}\left(\frac{q_{i+1}}{1-p_1}\right)^{\alpha} \nonumber \\
&\implies&
\frac{1}{1-\alpha}\log \left(\sum_{i=1}^{k-1}(p'_i)^{\alpha}\right)
=\frac{1}{1-\alpha}\log \left(\sum_{i=1}^{k-1}(q'_i)^{\alpha}\right) \nonumber \\
&\implies&
H_\alpha(Z'_1)= H_\alpha(Z'_2). \nonumber
\end{eqnarray}
In addition,
if $\alpha=1$,
then
\begin{eqnarray}
&&H_1(Z_1)= H_1(Z_2) \nonumber \\
&\implies&
\sum_{i=1}^{k}p_i\log \frac{1}{p_i}
=\sum_{i=1}^{k}q_i\log \frac{1}{q_i} \nonumber \\
&\implies&
\sum_{i=2}^{k}p_i\log \frac{1}{p_i}
=\sum_{i=2}^{k}q_i\log \frac{1}{q_i} \nonumber \\
&\implies&
\left(1-p_1\right)\log\left(1-p_1\right)
+\sum_{i=2}^{k}p_i\log \frac{1}{p_i}  \nonumber \\
&&=\left(1-p_1\right)\log\left(1-p_1\right)
+\sum_{i=2}^{k}q_i\log \frac{1}{q_i} \nonumber \\
&\implies&
\sum_{i=2}^{k}\frac{p_i}{1-p_1}\log \frac{1-p_1}{p_i}
=\sum_{i=2}^{k}\frac{q_i}{1-p_1}\log \frac{1-p_1}{\tau_i} \nonumber \\
&\implies&
\sum_{i=1}^{k-1}p'_i\log \frac{1}{p'_i}
=\sum_{i=1}^{k-1}q'_i\log \frac{1}{q'_i} \nonumber \\
&\implies&
H_1(Z'_1)= H_1(Z'_2). \nonumber
\end{eqnarray}
Finally,
we have
\begin{eqnarray}
H_\infty(Z'_1)- H_\infty(Z'_2)
&=&\lim_{\alpha\to\infty}H_\alpha(Z'_1)-\lim_{\alpha\to\infty}H_\alpha(Z'_2) \nonumber \\
&=&\lim_{\alpha\to\infty}\left(H_\alpha(Z'_1)-H_\alpha(Z'_2) \right)=0. \nonumber
\end{eqnarray}
It follows that $H_\alpha(Z'_1)= H_\alpha(Z'_2)$ for all $\alpha\in[0,\infty]$.
By the induction hypothesis,
there exists a permutation $\sigma'\in S_{k-1}$
such that $p'_i=q'_{\sigma'(i)}$ for all $i\in[k-1]$.
Define $\sigma(1)=1$ and $\sigma(i)=\sigma'(i-1)$ with $i\neq1$.
Then $\sigma\in S_k$ and $p_i=q_{\sigma(i)}$ for all $i\in[k]$.
Therefore,
the statement is true for $N=k$.
\end{proof}

In fact,
we can prove Lemma~\ref{lem:REtoPD}
by assuming a weaker condition as follows.
Let $S$ be a subset of $[0,\infty]$
including 0,
the extended real number $\infty$,
and a sequence $\{s_n\}_{n\in\mathbb{N}}$
such that $\lim_{n\to\infty}s_n=\infty$.
Then we can show that
if $H_\alpha(Z)= H_\alpha(W)$ holds for all $\alpha\in S$,
then $Z$ and $W$ have the same probability distribution.

The contrapositive of the following lemma proves Theorem~\ref{thm:SCP}.

\begin{Lem}[Sufficient conditions on the initial state $\ket{\psi}$ with $\mathbf{e}_{X\leftrightarrow Y}\left(\psi\right)>0$] \label{lem:SCp}
Let $(X,Y)$ be the pair of two systems,
which can be either $(A_1,B_1)$ or $(A,B)$.
Let $\{\lambda_i\}_{i=1}^N$ and $\{\tau_i\}_{i=1}^M$ be non-zero eigenvalues
for the reduced states $\rho_X$ and $\rho_Y$ of $\ket{\psi}$,
respectively,
which satisfy $\lambda_{1}\ge\cdots\ge\lambda_{N}$, $\tau_{1}\ge\cdots\ge\tau_{M}$,
and $\sum_{i=1}^{N}\lambda_i=\sum_{i=1}^{M}\tau_i=1$.
Then $\mathbf{e}_{X\leftrightarrow Y}>0$,
if one of the following conditions holds:

(i) $N\neq M$.

(ii) $N=M$ and $\lambda_{i'}\neq\tau_{i'}$ for some $i'\in[N]=\{1,\cdots,N\}$.
\end{Lem}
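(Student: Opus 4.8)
The plan is to reduce everything to the computable converse bound of Eq.~(\ref{eq:CLB}), which states $\mathbf{e}_{X\leftrightarrow Y}(\psi)\ge\max_{\alpha\in[0,\infty]}\left|S_\alpha(\rho_X)-S_\alpha(\rho_Y)\right|$ for $(X,Y)$ being either $(A_1,B_1)$ or $(A,B)$, combined with Lemma~\ref{lem:REtoPD}. It therefore suffices, under each hypothesis, to exhibit some order $\alpha$ at which $S_\alpha(\rho_X)\neq S_\alpha(\rho_Y)$, since then the right-hand side above is strictly positive. I will use throughout the elementary identity that the quantum R\'enyi entropy of a state coincides with the classical R\'enyi entropy of its spectrum, so that $S_\alpha(\rho_X)=H_\alpha(\{\lambda_i\}_{i=1}^N)$ and $S_\alpha(\rho_Y)=H_\alpha(\{\tau_i\}_{i=1}^M)$ for all $\alpha\in[0,\infty]$.

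Case (i) is immediate: taking $\alpha=0$ gives $S_0(\rho_X)=\log N$ and $S_0(\rho_Y)=\log M$, so if $N\neq M$ then $\mathbf{e}_{X\leftrightarrow Y}(\psi)\ge\left|\log N-\log M\right|>0$. For case (ii) I would argue by contradiction: suppose $S_\alpha(\rho_X)=S_\alpha(\rho_Y)$ for every $\alpha\in[0,\infty]$, i.e., $H_\alpha(\{\lambda_i\})=H_\alpha(\{\tau_i\})$ for all $\alpha$. Applying Lemma~\ref{lem:REtoPD} to the random variables with distributions $\{\lambda_i\}_{i=1}^N$ and $\{\tau_i\}_{i=1}^M$ yields $N=M$ together with a permutation $\sigma\in S_N$ such that $\lambda_i=\tau_{\sigma(i)}$ for all $i\in[N]$. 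Since both lists are arranged in non-increasing order, this multiset equality forces $\lambda_i=\tau_i$ for every $i$, contradicting the assumption that $\lambda_{i'}\neq\tau_{i'}$ for some $i'$. Hence the maximum in Eq.~(\ref{eq:CLB}) is strictly positive, and $\mathbf{e}_{X\leftrightarrow Y}(\psi)>0$.

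The genuine content sits entirely in Lemma~\ref{lem:REtoPD}; once that is available, the deduction above is routine bookkeeping. The only subtleties I anticipate are: checking that Eq.~(\ref{eq:CLB}) applies verbatim to both admissible pairs $(X,Y)$, and the passage from ``equal up to a permutation'' to ``equal entrywise'', which uses only that sorting a finite multiset in non-increasing order is well defined. No approximation or error analysis is needed, since we invoke only exact converse bounds.
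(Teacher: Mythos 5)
Your proposal is correct and follows essentially the same route as the paper: case (i) via $S_0$ and the converse bound of Eq.~(\ref{eq:CLB}), and case (ii) via Lemma~\ref{lem:REtoPD} together with the observation that two non-increasingly ordered distributions agreeing up to a permutation must agree entrywise. The only (cosmetic) difference is that the paper argues the contrapositive directly—explicitly locating the largest index where the spectra differ to rule out every permutation—whereas you run the same content as a proof by contradiction, which is if anything slightly cleaner.
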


\begin{proof}
(i)
If $N\neq M$,
then $\mathrm{rank}(\rho_{X})\neq\mathrm{rank}(\rho_{Y})$,
which means
\begin{equation*}
\mathbf{e}_{X\leftrightarrow Y}\left(\psi\right)\ge\left|S_0(\rho_X)-S_0(\rho_Y)\right|>0,
\end{equation*}
by the converse bound in Eq.~(\ref{eq:CLB}).

(ii)
Suppose that $\ket{\psi}$ satisfies $N=M$ and $\lambda_{i'}\neq\tau_{i'}$ for some $i'\in[N]$.
Let $Z$ and $W$ be discrete random variables on alphabets $\mathcal{Z}$ and $\mathcal{W}$
with $|\mathcal{Z}|=|\mathcal{W}|=N$,
whose probability distributions are $\{\lambda_i\}_{i=1}^N$ and $\{\tau_i\}_{i=1}^N$,
respectively.
Let us consider the set
\begin{equation*}
A=\{i\in[N]|\lambda_{i}\neq\tau_{i}\},
\end{equation*}
then $A$ is a non-empty subset of $[N]$,
since $i'\in A$.
So we can choose the largest element in $A$,
say $j$.
Then $\lambda_{j}\neq\tau_{j}$ and $\lambda_{i}=\tau_{i}$ for all $i>j$
by the definition of the set $A$.
If $\lambda_{j}>\tau_{j}$ (or $\lambda_{j}<\tau_{j}$)
than $\lambda_{i}>\tau_{j}$ (or $\lambda_{j}<\tau_{i}$) for all $i\in[j]$.
Thus $\lambda_{i}\neq\tau_{j}$ (or $\lambda_{j}\neq\tau_{i}$) for all $i\in[j]$,
which shows that
for each $\sigma\in S_j$,
there exists $i\in[j]$ such that $\lambda_{i}\neq\tau_{\sigma(i)}$.
It follows that
for each $\sigma\in S_N$,
there exists $i\in[N]$ such that $\lambda_{i}\neq\tau_{\sigma(i)}$.
From the contrapositive of Lemma~\ref{lem:REtoPD},
there exists $\alpha'\in[0,\infty]$ such that
$H_{\alpha'}(X)\neq H_{\alpha'}(Y)$.
Therefore,
from the converse bound in Eq.~(\ref{eq:CLB}),
we obtain
\begin{equation*}
\mathbf{e}_{X\leftrightarrow Y}\left(\psi\right)\ge\left|S_{\alpha'}(\rho_X)-S_{\alpha'}(\rho_Y)\right|>0.
\end{equation*}
\end{proof}

\end{document}